\definecolor{myurlcolor}{rgb}{0,0,0.7}
\definecolor{myrefcolor}{rgb}{0.8,0,0}
\newcommand{\ket}[1]{\left|#1\right\rangle}
\def\E{\mathcal{E}}
\newcommand{\be}{\begin{equation}}
\newcommand{\ee}{\end{equation}}
\definecolor{darkorange}{RGB}{255,140,0}
\newcommand{\bea}{\begin{eqnarray}}
\newcommand{\eea}{\end{eqnarray}}
\def\tr{\mathrm{tr}}
\newtheorem*{rep@theorem}{\rep@title}
\newcommand{\newreptheorem}[2]{%
\newenvironment{rep#1}[1]{%
 \def\rep@title{#2 \ref{##1}}%
 \begin{rep@theorem}}%
 {\end{rep@theorem}}}
\newtheorem{lemma}{Lemma}
\newtheorem*{result*}{Result}
\newcommand{\F}{\mathcal{F}}
\renewcommand{\P}{\mathrm{Pr}}
\newcommand{\px}{\sigma_x}
\newcommand{\pz}{\sigma_z}
\newcommand{\ptr}[2]{\text{tr}_{#1}\left[ #2 \right]}
\newcommand{\id}{id}
\newcommand{\NN}{\mathcal{N}}
\def\ket#1{\left| #1 \right\rangle}
\def\dm#1{\left|#1 \right\rangle \left\langle #1 \right|}
\newcommand{\ketbra}[3]{\left| #1 \right\rangle \left\langle #2 \right|_{\text{#3}}}
\newtheorem{prop}{Proposition}\def\PRO{\begin{prop}}\def\ORP{\end{prop}}
\newtheorem{coro}{Corollary}\def\COR{\begin{coro}}\def\ROC{\end{coro}}
\newtheorem{theo}{Theorem}\def \TH{\begin{theo}}\def\HT{\end{theo}}
\def\TH{\begin{theo}}\def\HT{\end{theo}}
\newtheorem{defi}[prop]{Definition}\def\DE{\begin{defi}}\def\ED{\end{defi}}
\newtheorem{lemme}[prop]{Lemma}\def\LE{\begin{lemme}}\def\EL{\end{lemme}}
\begin{document}
\title{Simple proof of confidentiality for private quantum channels in noisy environments}
\author{A. Pirker$^{1}$}\thanks{These authors contributed equally}
\author{M. Zwerger$^{1,2}$}\thanks{These authors contributed equally}
\author{V. Dunjko$^{1,3}$}
\author{H. J. Briegel$^{1,4}$}
\author{W.~D\"ur$^1$}
\affiliation{$^1$ Institut f\"ur Theoretische Physik, Universit\"at Innsbruck, Technikerstra{\ss}e 21a, A-6020 Innsbruck, Austria\\
$^2$ Departement Physik, Universit\"at Basel, Klingelbergstra{\ss}e 82, 4056 Basel, Switzerland\\
$^3$ Max-Planck-Institute of Quantum Optics, Hans-Kopfermann-Strasse 1, D-85748 Garching, Germany \\
$^4$ Fachbereich Philosophie, Universit\"at Konstanz, Universit\"atsstra\ss e 10, 78464 Konstanz,$\,$Germany}

\date{\today}

\begin{abstract}
Complete security proofs for quantum communication protocols can be notoriously involved, which convolutes their verification, and obfuscates the key physical insights the security finally relies on. In such cases, for the majority of the community, the utility of such proofs may be restricted. Here we provide a simple proof of confidentiality for parallel quantum channels established via entanglement distillation based on hashing, in the presence of noise, and a malicious eavesdropper who is restricted only by the laws of quantum mechanics. The direct contribution lies in improving the linear confidentiality levels of recurrence-type entanglement distillation protocols to exponential levels for hashing protocols. The proof directly exploits the security relevant physical properties: measurement-based quantum computation with resource states and the separation of Bell-pairs from an eavesdropper. The proof also holds for situations where Eve has full control over the input states, and obtains all information about the operations and noise applied by the parties. The resulting state after hashing is private, i.e., disentangled from the eavesdropper. Moreover, the noise regimes for entanglement distillation and confidentiality do not coincide: Confidentiality can be guaranteed even in situation where entanglement distillation fails. We extend our results to multiparty situations which are of special interest for secure quantum networks.  
\end{abstract}
\pacs{03.67.Dd, 03.67.Hk}
\maketitle

\section{Introduction}

Secure and private quantum communication is a concept of fundamental importance for emerging quantum technologies. The secure generation of a secret key for the encryption of classical data has received enormous attention in recent years \cite{ShorQKD,RennerPhd,ZhaoQKD,GottesmanQKD,LoQKD,LongGottesman,LongLo}, and is believed to be one of the key applications of quantum information science. Security has been shown under ever more general assumptions, finally arriving at device-independent proofs where the devices for secret key expansion are not trustworthy \cite{AcinDI,LimDI,UmeshDI}.  However, while establishing entanglement between two remote parties served as key ingredient in many security proofs of QKD, most existing proofs are not established by sharpening this intuition, i.e. they follow a more convoluted, tedious, and less straightforward route  \cite{RennerPhd,LongLeverrier,LongBiham,LongMayers}. 
\newline
Here we consider the problem of confidential or secure transmission of quantum information via quantum channels, equally important as QKD but far less studied. This task is closely related to the confidential generation of maximally entangled, distributed quantum states. Both are essential ingredients of quantum networks \cite{AcinNetwork,MeterNetwork,KimbleQInet}, quantum key agreement protocols \cite{XuQka,SunQka1,SunQka2}, and distributed quantum computation \cite{CiracDistributed}. In an idealized, noiseless situation a secure quantum channel, studied in \cite{SecChannelPortmann,SecChannelGarg,SecChannelBroadbent}, may be established in terms of teleporation \cite{BennettTele} using a perfect Bell-pair. The situation turns out to be far less straightforward in a noisy scenario. Nevertheless, it was shown that private entanglement is feasible when considering noisy channels and perfect operations \cite{Deutsch,BennettRecurrence}, as well as noise in local operations for independent and identically distributed (i.i.d.) \cite{Aschauer02} and non i.i.d. \cite{Pirker16a} situations. The latter works consider the recurrence-type entanglement distillation protocols  \cite{Deutsch,BennettRecurrence}, which probabilistically increase the fidelity and factor out any eavesdropper with a linear rate of convergence in terms of initial states. \newline 
Hashing protocols \cite{Be96,Du05,EPPallGraphs,ChenEPPGraph,ManevaEPP,HostensEPPGraph,GlancyEPPGraph,HostensEPPGraph2} are one-way entanglement distillation protocols (EDP) which overcome these limitations. They are deterministic and converge {\it exponentially fast} in terms of initial states towards several copies of a maximally entangled state. This enables for several confidential quantum channels in parallel, crucial for big quantum data transmission \cite{Zw17} and which is in contrast to recurrence-type entanglement distillation protocols. \newline
In this paper we provide a proof of confidentiality for hashing protocols in a noisy setting where the eavesdropper has full control over all the initial states. Since the confidentiality of recurrence-type entanglement distillation protocols \cite{Deutsch,BennettRecurrence} has been shown in similar scenarios \cite{Pirker16a}, this alone is not too surprising, even though hashing enables for exponential confidentiality levels rather than linear ones. Nevertheless, due to the simplicity of the confidentiality proof we clearly identify the relevant elements of physical properties from which the formal claim follows: the purity of the target state for noiseless distillation protocols and the way one deals with noise in measurement-based quantum computation (MBQC) with resource states. We emphasize that both are not exploitable in a noisy gate-based implementation as we illustrate later. The interest of using such characteristics, arguably, goes beyond the direct cryptographic statement they are implying. What is more, we identify a regime of noise where privacy, or equivalently confidentiality, is feasible, whereas distillation is not. Furthermore we show that hashing establishes privacy even when the eavesdropper is provided with information regarding all noise processes occurring in Alice's and Bob's laboratory, which is one step towards device independence for protocols with a quantum output.\newline
Early security proofs for QKD \cite{LongLo} rely on fault-tolerant quantum computation to reduce the problem of proving security to a noiseless setting, and utilize quantum random hashing \cite{Be96} to verify the successful generation of entanglement. In contrast, our approach eliminates the necessity of fault-tolerant quantum computation by exploiting physical properties of MBQC, and we use hashing as an active tool to establish high-fidelity entangled pairs via entanglement distillation rather than verifying them. Other works \cite{GottesmanQKD,ShorQKD,LongGottesman} also use the existence of (one-way) entanglement distillation protocols. However, earlier works \cite{LongLo,ShorQKD,LongGottesman} lack a full treatment of the finite size setting, crucial for realistic regimes \cite{LongLeverrier}. In contrast, here we analyse the finite size performance of hashing and explicitly provide confidentiality levels also in non-i.i.d. scenarios.  \newline
Entanglement distillation protocols aim at distilling entanglement from a noisy ensemble of bi- or multipartite quantum states via local operations and measurements. Hashing protocols \cite{Be96,Du05,EPPallGraphs,ChenEPPGraph,ManevaEPP,HostensEPPGraph,GlancyEPPGraph,HostensEPPGraph2} form a specific subset of those protocols, which rely on the concept of likely subspaces \cite{Schumacher}, used in information theory, and universal hash functions \cite{UniversalHash}, typically applied in the context of privacy amplification. Their operation is usually described on a large, noisy ensemble (called initial states) and one distills in the asymptotic limit a fraction of systems in a maximally entangled state, see Appendix \ref{app:bipartite:hashing} for more details. However, it was shown that hashing via quantum gates fails in the presence of noise \cite{Zw14H}. This drawback is overcome by measurement-based quantum information processing \cite{RausMeasQCcluster}. There, the desired quantum operation is realized via Bell-measurements between the input quantum state and the input qubits of a resource state, referred to as read-in measurements. Consequently the only source of noise within this computational approach is due to imperfect resource states and noisy Bell-measurements (which can be accounted for by an increased level of the noise acting on the resource state, see \cite{Zw14H}). A measurement-based implementation of the hashing protocol, see Appendix \ref{sec:meas:hash}, is capable of distilling entanglement for local depolarizing noise (LDN) up to $7 \%$ acting on each qubit of the resource state \cite{Zw14H}. This is due to an observation made in \cite{Zw13}: LDN acting on the input qubits of the resource state can {\it virtually} be moved to the initial states. Furthermore, LDN noise acting on the output qubits of the resource state can be assumed to act afterwards, since it commutes with the read-in measurements. These observations provide insights how one deals with LDN in MBQC, a physical characteristic which is not directly usable in quantum circuits, see Appendix \ref{sec:meas:hash}. More precisely, for gate-based implementations the situation is more complex and difficult to formalize in a useful way, since noise introduced by quantum gates gets highly correlated on propagating noise through the entire circuit. \newline
In a multipartite setting, a measurement-based implementation of the hashing protocol might turn out to be very useful for large scale quantum network architectures which rely on e.g. GHZ states \cite{Pirker17a}. \newline
In this paper we will use the terms {\it confidential}, {\it secure}, {\it privacy}, {\it private states} and {\it private entanglement}. Therefore we want to clarify their relationship and their distinction before using them. \newline
A communication channel, either classical or quantum, is referred to as confidential if an eavesdropper can not obtain any information regarding the data being transmitted. Nevertheless, the eavesdropper might change the data during transmission without being detected. Therefore we refer to privacy as the ability of two (or more) parties to establish a confidential communication channel. A communication channel is considered to be secure, if it is confidential and authenticated, where authenticated here means that the eavesdropper can not alter the data without being detected by the parties. In the quantum case we call a state private if it can be used to establish a confidential quantum channel, i.e., a state which is entangled between Alice and Bob but not entangled with the eavesdropper. The term private state was already introduced in the context of QKD for generating classical keys from states with bound entanglement \cite{Horodecki2005} and computing secret key capacities of quantum channels \cite{Pirandola2017}. For that purpose \cite{Horodecki2005,Pirandola2017} consider additional systems, known as shield systems, to decouple an eavesdropper from maximally entangled states to generate a secure key between two parties. However, privacy or private states as we consider here, refer to the ability of establishing a confidential quantum channel without the notion of shield systems. The entanglement of such a state is then referred to as private entanglement. \newline
For full formal definitions, proofs and supportive information we refer to the supplemental material. However, the confidentiality proof of hashing is self-contained in the main text. 

\section{Results}

We consider two categories of players: protocol participants and Eve, the eavesdropper, from which the participants request their initial states $\rho^{(n)}$ used for distillation. The former, connected via classical authenticated channels, wish to distill $m$ copies of a certain state $\ket{\varphi}$. In the bipartite setting, the state $\ket{\varphi}$ might correspond to a perfect Bell-pair \cite{Be96} whereas in the mutlipartite setting to a specific multipartite state \cite{Du05,EPPallGraphs,ChenEPPGraph,ManevaEPP,HostensEPPGraph,GlancyEPPGraph,HostensEPPGraph2} . The latter distributes the initial states via noisy quantum channels and has full control over them. In particular, Eve might be fully entangled with all initial states, which corresponds to the most general scenario how initial states can be distributed. \newline
Hashing in its original form assumes initial states of tensor product form, i.e. $\rho^{(n)} = \rho^{\otimes n}$ where $\rho$ is a density operator of a multi-partite quantum state and $n$ is asymptotically large. Furthermore, distillation will only be feasible if the entropy of the initial states is sufficiently low, see e.g. \cite{Be96} for bipartite hashing. \newline
To accommodate these requirements, we propose the following protocol: First the participants agree on a number of desired output systems $m$ and a confidentiality level $\varepsilon$. From these values they compute the number of systems $n$ which are necessary to meet both conditions, assuming the worst case entropy for the initial states. Then, the participants request $n + kn$ systems from Eve subject to distillation. They apply a local twirling operation which ensures that the systems are diagonal within the respective basis (for the bipartite protocol they twirl towards Werner form). Next, they sacrifice $kn$ systems for parameter estimation in order to estimate the actual fidelity $F$ relative to $\ket{\varphi}$ for each system. Depending on their estimate $\overline{F}$, they either abort the protocol because the fidelity is outside $[F_{\mathrm{min}}, F_{\mathrm{max}}]$ or they continue with a measurement-based implementation of the hashing protocol. Finally they output $m$ systems. When generalizing to arbitrary initial states the protocol will be prepended by a symmetrization step. \newline
To formalize our confidentiality criterion we recall some basic terminology introduced in \cite{Pirker16a}. We define the noiseless ideal map $\F$, which takes as input the initial states and outputs, depending on parameter estimation, either the asymptotic state of the hashing protocol, $\dm{\varphi}^{\otimes m}$, or some output state, $\sigma^{\perp}_{PE}$. For example, in a bipartite setting $\dm{\varphi}^{\otimes m} = \dm{B_{00}}^{\otimes m}$ where $\ket{B_{00}} = (\ket{00} + \ket{11})/\sqrt{2}$. The ideal map $\F$ abstracts the distillation protocol for an initial state $\rho$ as a process: internally it runs the real protocol for initial state $\rho$ to its very end which succeeds with probability $p_{\rho}$, and depending on parameter estimation, it either replaces the final state with its asymptotic state, or it outputs whatever state was reached by the protocol, $\sigma^{\perp}_{PE}$. This approach to define ideal functionality stems from well-established ideas in QKD \cite{RennerPost}. Formally we define
\begin{align}
(\F \otimes \id_E) & \left(\dm{\psi}_{PE} \right) = p_{\rho} \dm{\varphi}^{\otimes m} \otimes \sigma_E \otimes \dm{ok}_{f} \notag \\
& + (1-p_{\rho}) \sigma^{\perp}_{PE} \otimes \dm{fail}_{f} \label{eq:idealmap}
\end{align}
where $\ket{\psi}_{PE}$ is a purification of the initial state $\rho$ provided by Eve and $ p_{\rho}$ denotes the probability of the protocol succeeding for initial state $\rho$. The system $f$ distinguishes the accepting from the aborting branch. \newline
To analyze confidentiality taking into account realistic noisy scenarios, we also define the noisy ideal map $\F^\alpha$, where $\alpha$ characterizes the level of noise, as $\F^\alpha = \mathcal{N}^\alpha \circ \F$, where $\mathcal{N}^\alpha$ denotes the noise process acting on the output qubits of the resource states of hashing. \newline
We first clarify the noise processes we assume to act on the resource states of the measurement-based implementation of hashing, which motivate our definition of the ideal noisy map. We observe that there are a number of dominating sources of noise: noise on the resource states, noise on the read-in Bell measurements, and noise on the initial states subject to distillation. \newline
For the noise acting on the resource states we assume i.i.d. local depolarizing noise. This is physically reasonable due to the observations in \cite{Wallnofer16}, which shows that i.i.d. local depolarizing noise provides an accurate approximation of noise acting on resource states if these states get generated locally via entanglement distillation. \newline
The resource states for the measurement-based implementation of hashing consist only of input and output qubits, see Appendix \ref{sec:meas:hash} for further details. We denote the noise acting on the input qubits and output qubits of the resource states by $\mathcal{N}_{\mathrm{in}} = \prod^n_{j=1} \mathcal{D}_{j}(\alpha)$ and $\mathcal{N}_{\mathrm{out}} = \prod^m_{k=1} \mathcal{D}_{k}(\alpha)$ respectively where 
\begin{align}
\mathcal{D}_j(\alpha) \rho = \alpha \rho + \frac{1-\alpha}{4} (\rho + X_j \rho X_j + Y_j \rho Y_j + Z_j \rho Z_j) \label{eq:ldn},
\end{align}
with $\alpha \in [0, 1]$ quantifies the level of noise and the subscript $j$ denotes the qubit on which the Pauli operators act on. Furthermore, we can take into account for the noise which the read-in Bell measurements introduce by a lower value of $\alpha$ in $N_{\mathrm{in}}$, which we denote by $\beta$, see \cite{Zw14H}. Hence, we have $\mathcal{N}_{\mathrm{in}} = \prod^n_{j=1} \mathcal{D}_{j}(\beta)$. \newline
Because we can now mathematically shift the noise from the input qubits of the resource states to the initial states, we decompose the ideal noisy map $\F^\alpha$ as the concatenation of noise acting on the initial states followed by the noiseless ideal hashing protocol and noise acting on the  output qubits of the hashing protocol, i.e. $\F^\alpha = \mathcal{N}_{\mathrm{out}} \circ \F \circ \mathcal{N}_{\mathrm{in}}$. Because we can take into account for $\mathcal{N}_{\mathrm{in}}$ in the parameter estimation step of the ideal map $\F$ we end up with $\F^\alpha = \mathcal{N}^\alpha \circ \F$, where we have defined $\mathcal{N}^\alpha = \mathcal{N}_{\mathrm{out}}$. \newline
This enables us now to precisely define the term confidentiality. In particular, we call the hashing protocol $\mathcal{E}^\alpha$ $\varepsilon$-confidential, if
\begin{align}\label{eq:def:conf}
\| \E^\alpha  -\F^\alpha \|_{\diamond} \leq \varepsilon
\end{align}
where $\| \Delta \|_{\diamond} = \sup_{k \in \mathbb{N}} \|\Delta \otimes id_k \|_{\mathrm{op},1}$ for a CPTP map $\Delta$ with $\| \Delta \|_{\mathrm{op},1} := \sup_{\| \rho \|_1 \leq 1} \|\Delta (\rho)\|_1$ and $\| \rho \|_{1} = \tr{\sqrt{\rho \rho^\dagger }}$ denotes the $1-$norm of a density operator $\rho$, see also \cite{Kitaev1997}. \newline
Observe that the state $\dm{\varphi}^{\otimes m}$ in the accepting branch of $\F^\alpha$, see (\ref{eq:idealmap}), is private, i.e., a state which is disentangled from Eve. This motivates the term privacy distillation. \newline 
We outline the remainder of this paper as follows: We start by estimating the rate of convergence of noiseless bipartite hashing for finitely many i.i.d. initial states. Next, we generalize this result to arbitrary initial states including the eavesdropper's system via the post-selection technique. This will finally imply the confidentiality guarantees for the noisy measurement-based implementation of hashing. \newline
The hashing protocol \cite{Be96} deterministically converges exponentially fast towards several copies of $\ket{B_{00}}$ for i.i.d. initial states. In particular, we find for the modified (i.e., our proposed) hashing protocol $\E$, taking $n + kn$ initial states $\rho$, that
\begin{align}
\|\E & (\rho^{\otimes n + kn}) - \F (\rho^{\otimes n + kn}) \|_1 \notag \\
& \leq 2 \left[2 \exp(-n x_1(\delta)) + 2^{-n \delta} \right. \notag \\
& \left. + 2 \exp\left(-(F_{\mathrm{max}} - F_{\mathrm{min}})^2 kn/16 \right) \right] \label{eq:iid:finite}
\end{align}
where  $x_1(\delta) = 1/a_{\mathrm{max}} \left[ \left(g_{\mathrm{max}} + \delta \right) \log \left(1+\frac{\delta}{g_{\mathrm{max}}} \right) - \delta \right]$ and $a_{\mathrm{max}}$, $g_{\mathrm{max}}$ are  constants depending on $F_{\mathrm{min}}$ and $F_{\mathrm{max}}$. The parameter $\delta$ stems from the hashing protocol \citep{Be96} and affects the number of output systems $m = n(1-S(\rho)-2 \delta)$ where $S(\rho)$ denotes the von Neumann entropy of $\rho$ as well as the rate of convergence governed by (\ref{eq:iid:finite}). For our purposes we choose $\delta = n^{-1/5}$, see Appendix  \ref{app:bipartite:iid}. In addition, the right-hand side of (\ref{eq:iid:finite}) approaches zero  exponentially fast. \newline
Eq. (\ref{eq:iid:finite}) can be derived from the following observations, see also Appendix \ref{app:bipartite:iid}: The $1-$norm induced distance of $\E(\rho^{\otimes n + kn})$ and $\F (\rho^{\otimes n + kn})$ is equal to the distance within the $ok-$branch, because $\E$ and $\F$ agree on the $fail-$branch. The protocol can fail due to three reasons where each type of failure occurs with a certain probability. The first one corresponds to the case that the ensemble of Bell pairs falls outside of the likely subspace and is given by $2 \exp(-n x_1(n^{-1/5}))$. The second one bounds the probability of misidentifying the string by $\exp(-n^{4/5} \ln 2)$, and the third one bounds the failure probability of parameter estimation by $2 \exp\left(-(F_{\mathrm{max}} - F_{\mathrm{min}})^2 kn/16 \right)$. \newline
Nevertheless, (\ref{eq:iid:finite}) is insufficient to prove full cryptographic confidentiality, as it only concerns the systems of the participants and i.i.d. initial states. So the next step is to generalize (\ref{eq:iid:finite}) to arbitrary initial states including the system of Eve which is the topic of the next section. \newline
In order to provide an estimate of (\ref{eq:def:conf}) for bi- and multipartite hashing protocols in terms of i.i.d. initial states, e.g. (\ref{eq:iid:finite}), we proceed similar to the approach of \cite{Pirker16a}: First we relate the distance of the real and ideal map including Eve's purifying system at the beginning of the protocol to the distance between the respective maps concerning the systems of the participants only. Second we use the post-selection technique \cite{RennerPost}, which implies that the distance between the real and ideal map for any purification of the initial states is bounded by a specific pure state, a purification of the so called de-Finetti Hilbert-Schmidt state. \newline
We eliminate the first issue by using an inherent characteristic of noiseless distillation protocols: the target state of such protocols shared between Alice and Bob is pure, provided the parameter estimation is passed. Therefore the state of Alice and Bob is independent of Eve, i.e. there is no residual entanglement to her. We formalize this intuition via the following observation, rigorously proven in Appendix \ref{app:bipartite:localclose}: If the output of the real and ideal map, i.e. $\E$ and $\F$ respectively, differ at most $\varepsilon$ for a particular initial state $\rho$, then they differ at most $4 \sqrt{\varepsilon}$ on any purification $\ket{\psi}$ of $\rho$, i.e. 
\begin{align}
\| (\E \otimes id_{E} - \F \otimes id_{E})(\ketbra{\psi}{\psi}{ABE}) \|_1 \leq 4 \sqrt{\varepsilon}. \label{eq:localclose}
\end{align} \newline 
The next step is to relate non-i.i.d. initial states to i.i.d. initial states. Recall that the post-selection technique is applicable to permutation invariant maps only. Because hashing protocols are not permutation invariant maps, we have to prepend the overall protocol by a symmetrization step in order to apply the post-selection technique. This finally enables us to prove confidentiality of hashing protocols according to (\ref{eq:def:conf}) via the following theorem.
\begin{theo}[Post-selection-based reduction technique]\label{theo:postselect}
Let $\E^{s}$ be the real protocol and $\F^{s}$ the ideal protocol prepended by a symmetrization step ($s$) taking $n + kn$ initial states. Let $\E$ and $\F$ be the sub-protocols after symmetrization. Then we have 
\begin{align}
\|\E^{s} - \F^{s} \|_{\diamond} \leq 4 g_{n+kn,d} \sqrt{\max_{\sigma_{AB}} \left\|(\E-\F)\left(\sigma^{\otimes n + kn}_{AB} \right) \right\|_1} 
\end{align}
where $d$ denotes the dimension of an individual system and $g_{n+kn,d} = {n+kn+d^2-1 \choose n} \leq (n+kn+1)^{d^2 - 1}$.
\end{theo}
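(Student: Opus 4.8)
\emph{Proof idea.} The plan is to obtain the bound by chaining two ingredients already established above: the post-selection (de Finetti) technique \cite{RennerPost}, which collapses a diamond-norm statement about a permutation-covariant difference of channels to a statement about one fixed input, and the purification estimate (\ref{eq:localclose}), which converts closeness on a reduced state into closeness on any purification at the cost of a square root. The symmetrization step $s$ is prepended for a single reason: hashing is not permutation invariant, so $\E-\F$ need not be permutation covariant, but after a uniformly random permutation of the $N:=n+kn$ input systems the difference becomes $\E^{s}-\F^{s}=\tfrac{1}{N!}\sum_{\pi}(\E-\F)\circ\mathcal{W}_{\pi}$ (with $\mathcal{W}_{\pi}(\cdot)=U_{\pi}(\cdot)U_{\pi}^{\dagger}$), which is permutation invariant and hence eligible for the post-selection technique.

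First I would apply that technique to $\E^{s}-\F^{s}$ on $N$ systems of dimension $d$. It supplies a \emph{fixed} purification $\ket{\zeta}_{(AB)^{N}R}$ of the de Finetti--Hilbert--Schmidt state $\tau_{N}=\int\!\mathrm{d}\sigma_{AB}\,\sigma_{AB}^{\otimes N}$ (integration over the Hilbert--Schmidt measure on single-system density operators) such that
\begin{align}
\|\E^{s}-\F^{s}\|_{\diamond}\ \le\ g_{n+kn,d}\,\big\|\big((\E-\F)\otimes\id_{R}\big)\!\big(\dm{\zeta}\big)\big\|_{1},\label{eq:ps-step}
\end{align}
where I used that $\tau_{N}$ is permutation invariant, so $\mathcal{S}$ acts trivially on it and $\E^{s},\F^{s}$ collapse to $\E,\F$ on the right-hand side; the prefactor is the standard combinatorial constant $g_{n+kn,d}\le(n+kn+1)^{d^{2}-1}$, with the exponent $d^{2}-1$ reflecting the dimension of the manifold of single-system density operators over which $\tau_{N}$ is averaged.

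Second I would bound the right-hand side of (\ref{eq:ps-step}). As $\tau_{N}$ is a convex mixture of the i.i.d. states $\sigma_{AB}^{\otimes N}$, convexity of the trace norm gives $\|(\E-\F)(\tau_{N})\|_{1}\le\max_{\sigma_{AB}}\|(\E-\F)(\sigma_{AB}^{\otimes N})\|_{1}=:\varepsilon_{0}$. Since $\ket{\zeta}$ is a purification of $\tau_{N}$, applying (\ref{eq:localclose}) with $\rho=\tau_{N}$ (the reference $R$ playing the role of $E$; this is legitimate because any two purifications of a fixed state differ by an isometry on the untouched system, under which the trace norm is invariant) upgrades this to
\begin{align}
\big\|\big((\E-\F)\otimes\id_{R}\big)\!\big(\dm{\zeta}\big)\big\|_{1}\ \le\ 4\sqrt{\left\|(\E-\F)(\tau_{N})\right\|_{1}}\ \le\ 4\sqrt{\varepsilon_{0}}.\label{eq:lift-step}
\end{align}
Inserting (\ref{eq:lift-step}) into (\ref{eq:ps-step}) yields $\|\E^{s}-\F^{s}\|_{\diamond}\le4\,g_{n+kn,d}\sqrt{\varepsilon_{0}}$, which is the claim.

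The only genuine obstacle, as far as I can see, is the bookkeeping in the first step: one must check carefully that the symmetrized difference map is permutation covariant in exactly the sense the post-selection technique demands---including that the flag register $f$ separating the accepting and aborting branches is not permuted---and that the de Finetti state it furnishes is the one over single-system states $\sigma_{AB}$, which is what pins down the exponent $d^{2}-1$ in $g_{n+kn,d}$. One also has to be sure (\ref{eq:localclose}) is invoked legitimately for the mixed state $\tau_{N}$ and an arbitrary purification of it, not merely for i.i.d. inputs; this is immediate from its phrasing (``any purification of $\rho$''), but it is the conceptual linchpin of the reduction. Everything else---convexity of the trace norm and the elementary bound on the binomial coefficient---is routine, and the full details would be deferred to \cite{Sup}.
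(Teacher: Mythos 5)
Your proposal is correct and takes essentially the same route as the paper: apply the post-selection technique to the symmetrized maps, then the local-to-global closeness bound (\ref{eq:localclose}) to the purification of the de Finetti--Hilbert--Schmidt state, then convexity of the trace norm to reduce to the maximum over i.i.d. inputs $\sigma_{AB}^{\otimes n+kn}$. The only (harmless and easily repaired) imprecision is dropping the superscript $s$ already at the level of the purified state in your first display --- the purification need not be invariant under permutations acting on the system side alone --- whereas the paper keeps $\E^{s},\F^{s}$ there and removes the symmetrization only after tracing out the reference, where $\mathcal{S}(\tau_{N})=\tau_{N}$ makes the collapse legitimate.
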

The parameter $d$ in Theorem \ref{theo:postselect} corresponds to the dimension of each individual initial state, therefore it is constant for a specific protocol and we have for $M$ participants that $d = 2^M$. \newline
We sketch the proof of Theorem \ref{theo:postselect} as follows: The post-selection technique of \cite{RennerPost} implies that $\|\E^{s} - \F^{s}\|_{\diamond} = \sup_{\ket{\psi}_{ABE}} \|(\E^{s} - \F^{s}) \otimes id_{E}(\ketbra{\psi}{\psi}{}) \|_1$ is bound by evaluating this expression for a particular state, a purification of the de-Finetti Hilbert-Schmidt state. Hence we apply our previous observation, i.e. (\ref{eq:localclose}), to that particular initial state which reduces the confidentiality proof to i.i.d. initial states. For the complete proof of Theorem \ref{theo:postselect} we refer to Appendix  \ref{app:bipartite:thm}. \newline
We now easily conclude confidentiality of the noiseless bipartite hashing protocol prepended by symmetrization by combining Theorem \ref{theo:postselect} for $d=4$ and (\ref{eq:iid:finite}) which leads to 
\begin{align}
\|\E^{s} - \F^{s}\|_{\diamond} & \leq 4 \sqrt{2} (n+kn+1)^{15} \notag \\
& \times \left[2 \exp(-n x_1(n^{-1/5})) + \exp\left(-n^{4/5} \ln 2\right) \right. \notag \\
& \left. + 2 \exp\left(-(F_{\mathrm{max}} - F_{\mathrm{min}})^2 kn/16 \right) \right]^{1/2}.
\label{eq:conf:arb}
\end{align}
Eq. (\ref{eq:conf:arb}) analytically proves that arbitrary confidentiality levels can be achieved via the hashing protocol \cite{Be96} and finally enables us to show confidentiality for a noisy measurement-based implementation of the hashing protocol. \newline
Recall that the resource states, necessary for a measurement-based implementation of the hashing protocol, are subject to LDN acting on all qubits, $\mathcal{D}(\alpha) = \prod^n_{l=1} \mathcal{D}_l(\alpha)$ where $\mathcal{D}_l(\alpha)$ is defined in Eq. (\ref{eq:ldn}) and that we include the noise of a noisy Bell-measurement at the read-in in the value of $\alpha$ in (\ref{eq:ldn}), see \cite{Zw14H}. For a more detailed discussion of this noise model we refer to \cite{Wallnofer16} and Appendix \ref{sec:meas:hash}. \newline
The confidentiality proof for the noisy measurement-based implementation of hashing now concludes by using the following intuition from MBQC with resource states: the LDN on the input qubits can be moved, due to the symmetry of Bell-states, to the initial states whereas LDN acting on the output qubits can be assumed to act after the protocol. Therefore one is left with a noiseless hashing protocol generating pure states affected by LDN. We reiterate that such an approach is not directly applicable in the setting of gate-based implementations. \newline
We sharpen this observation as follows: The resource state of the protocol consists only of input and output qubits, see Appendix \ref{sec:meas:hash} and , and according to \cite{Zw13} we can virtually move the noise acting on the input qubits to the initial states provided by Eve. Thus we deal with this part of the noise via a modification of parameter estimation, since the entropy of the initial states increases after virtually moving the noise. The noise acting on the output qubits of the resource states can be assumed to act after the protocol completes, as that noise commutes with the read-in Bell-measurements. This leaves us with a noiseless protocol followed by LDN acting on the output qubits, which just slightly depolarizes the pure Bell-pairs from noiseless hashing. Moreover, this noise stems from the apparatus so this does not jeopardize confidentiality. In particular, because LDN is a CPTP map, the contractivity of the $1-$norm implies (see also Appendix \ref{app:noiseproof}) that
\begin{align}
\|\E^{s,\alpha} - \F^{s,\alpha} \|_{\diamond} \leq \|\E^{s} - \F^{s}\|_{\diamond} \label{eq:conf:noise:all}
\end{align} 
where $\E^{s,\alpha}$ and $\F^{s,\alpha}$ denote the real and the ideal noisy hashing protocol prepended by symmetrization, and noise of strength $1-\alpha$ of the form (\ref{eq:ldn}) acts on each qubit of the resource state independently and identically. Hence the noisy implementation offers the same confidentiality guarantees as the noiseless implementation, the protocol just simply aborts more often during parameter estimation. \newline
We highlight that the proof of confidentiality for noisy hashing does not require any numeric evidence, whereas the proof in \cite{Pirker16a} for the distillation protocol \cite{Deutsch} relies on numerical simulations. Furthermore the tolerable noise for post-selection is significantly higher, namely of the order of several percent per qubit compared to $O(10^{-20})$ in \cite{Pirker16a}, although it should be mentioned that the noise models are different and cannot directly be compared. \newline
Furthermore we find that there exists a regime of noise for bipartite hashing where privacy, or equivalently confidentiality, is achievable even though distillation is not feasible. For this regime, the privacy regime, hashing decreases the fidelity of each output system relative to $\ket{B_{00}}$, i.e., the protocol washes out entanglement rather than distilling it, but nevertheless, any eavesdropper factors out. In contrast, if the noise level is within the distillation regime the fidelity of each output system relative to $\ket{B_{00}}$ increases, and, as a consequence, any eavesdropper factors out. For private states in the context of QKD a similar observation was made in \cite{Horodecki2005}, where it was shown that even though entanglement distillation is not feasible yet secure keys can still be generated from private states with bound entanglement. 
 \newline 
It is interesting to qualitatively compare these findings to earlier work: in \cite{Aschauer02,As02b} confidentiality aspects were studied in the framework of a gate-based implementation of the entanglement distillation protocol of \cite{Deutsch}. It was also found that the noise regimes for privacy and distillation do not coincide, but contrary to the results presented here, the privacy regime for the gate based implementation was found to be a subset of the distillation regime. For more details on those noise regimes we refer to Appendix \ref{app:noiseregimes}. \newline
We consider the scenario where the local apparatus leaks all the information  about the noise processes realized (by the noisy resource states of the hashing protocol) to Eve as in \cite{Pirker16a,Aschauer02}. Theorem 7 of \cite{Pirker16a} states that if a real protocol $\E^\alpha$ is $\varepsilon$-confidential, then it is $2 \sqrt{\varepsilon}$-confidential if the noise transcripts leak to Eve. The resulting states remain private and enable for confidential quantum channels. \newline
The hashing protocol \cite{Be96} can be generalized to multipartite quantum states \cite{Du05,EPPallGraphs,ChenEPPGraph,ManevaEPP,HostensEPPGraph,GlancyEPPGraph,HostensEPPGraph2}, which is relevant for distributed quantum computation \cite{CiracDistributed}, quantum key agreement protocols \cite{XuQka,SunQka1,SunQka2} and quantum networks \cite{AcinNetwork,MeterNetwork,KimbleQInet,Pirker17a}. Also for those protocols one shows their confidentiality by following the same line of argumentation, which can be found in Appendix \ref{app:multi}. 

\section{Discussion} 

In summary we have analytically shown that noisy measurement-based implementations of bi- and multipartite hashing protocols establish exponential confidentiality levels. We directly exploited the properties of MBQC with resource states which leads, together with the purity of the asymptotic state of noiseless hashing and the post-selection technique, to a short, straightforward and transparent confidentiality proof. \newline
Furthermore, the privacy and distillation regimes do not coincide, similarly to private states with bound entanglement in the context of QKD. In particular, there exists a regime of local i.i.d. noise where privacy is achievable, but distillation is not. In this regime, any eavesdropper is factored out despite no entanglement being distilled. Nevertheless, in both regimes the final states are disentangled from any eavesdropper, which enables for secure quantum channels, if the information regarding the noise processes do not leak to the eavesdropper. If this information leaks to the eavesdropper, confidential quantum channels are still feasible as the resulting states remain private. \newline


\begin{acknowledgements}

This work was supported by the Austrian Science Fund (FWF): P28000-N27, P30937-N27 and SFB F40-FoQus F4012, by the Swiss National Science Foundation (SNSF) through Grant number PP00P2-150579, the Army Research Laboratory Center for Distributed Quantum Information via the project SciNet and the EU via the integrated project SIQS.
\end{acknowledgements}

\appendix

\section{Bipartite hashing protocol and its measurement-based implementation}\label{app:bipartite:hashing}

In this section of the supplementary material we provide a short review of the biparite hashing protocol \cite{Be96}, we introduce the measurement-based implementation thereof \cite{Zw14H} and discuss its advantages over a gate-based approach. \newline
In the following we denote the four Bell-basis states by $\ket{B_{ij}} = (\id \otimes \px^j \pz^i) \ket{B_{00}}$ where $i \in \lbrace 0,1 \rbrace$ is referred to as the phase bit, $j \in \lbrace 0,1 \rbrace$ is referred to as the amplitude bit of $\ket{B_{ij}} $ and $\ket{B_{00}} = (\ket{00} + \ket{11})/\sqrt{2}$. 

\subsection{Entanglement distillation via hashing}\label{sec:epp}

Entanglement distillation protocols distill a maximally entangled state from several noisy copies provided the initial fidelity, defined as $F(\rho, \sigma) = \tr{\sqrt{\rho^{1/2} \sigma \rho^{1/2}}}$ for density operators $\rho$ and $\sigma$ where $\sigma = \dm{\varphi}$ (the desired target state), is sufficiently high. Several protocols have been proposed for this task, which we divide into two categories depending on the number of systems they utilize within each basic distillation step. In the first group we have recurrence-type protocols \cite{BennettRecurrence, Deutsch} 
which work pair-wise, whereas in the second group we have so-called hashing-type protocols \cite{Be96} that operate, in principle, on the entire ensemble. Common to both classes of protocols is that they utilize local operations, measurements and classical communication. \newline
Recurrence-type protocols are robust against local noise in both the gate-based \cite{DurRepeater} and measurement-based implementations \cite{Zw13}. In contrast, the gate-based implementations of hashing-type protocols are fragile with respect to noise of the local apparatus as we will discuss briefly. \newline 
The hashing protocol \cite{Be96} is an entanglement distillation protocol which operates on a large ensemble of noisy initial states in an iterative manner. In its standard version, the participants assume to receive $n$ copies of an initial state $\rho$, where $\rho$ is a two qubit density operator diagonal in the Bell-basis. The hashing protocol outputs $m=n(1-S(\rho))$ systems in the asymptotic limit where $S(\rho) < 1$ denotes the von-Neumann entropy of $\rho$. At each basic distillation step, which we also refer to as a round, the participants apply local operations according to a string drawn uniformly at random and followed by a controlled NOT into one target state. More precisely, they accumulate the phase and/or amplitude bit $i$ and $j$ of $\rho= \sum_{i,j} p_{ij} \dm{B_{ij}}$ of each individual pair into one target system via several controlled NOTs. Recall that such a bilateral controlled NOT transforms a tensor product of two Bell-states $\ket{B_{i_1 j_1}}$ and $\ket{B_{i_2 j_2}}$ to the tensor-product state $\ket{B_{i_1 \oplus i_2 j_1}} \ket{B_{i_1 j_1 \oplus j_2}}$. Next, the parties measure the target Bell-pair which is determined by the string. This measurement reveals essentially one bit of parity information about the remaining ensemble, thereby purifying it (as the mixedness of a state can be interpreted as a lack of classical information). The basic distillation step is iterated several times and in the end a fraction of purified systems remains. \newline
Hashing protocols rely on two fundamental concepts related to classical coding theory: likely subspace encoding and universal hashing. The idea of likely subspace encoding for ensembles of quantum states was first mentioned, to our knowledge, in \cite{Schumacher}. There it was proven that an asymptotic ensemble of i.i.d. quantum states $\rho^{\otimes n}$ where $\rho = \sum_{i} p_{i} \dm{v_i}$ is a density operator which receives almost all its weight from a small subspace spanned by so-called likely sequences $\{ \bigotimes_k |v_{i^{(j)}_k} \rangle \langle v_{i^{(j)}_k} | \}_{j \in J}$ where one identifies a specific sequence $\bigotimes_k \dm{v_{i_k}}$ with the bit string $(i_1,...,i_n)$. More precisely, the probability of finding a particular sequence $(j_1,...,j_n)$ that is outside this likely subspace can be made arbitrarily small in terms of the number of copies $n$ of $\rho$. In case of the hashing protocol the vectors $|v_{i} \rangle$ in $\rho = \sum_{i} p_{i} \dm{v_i}$ of the initial states $\rho^{\otimes n}$ correspond to individual Bell-states $\ket{B_{ij}}$. The original proposal of the likely subspace in \cite{Schumacher} relies on the weak law of large numbers, which is an asymptotic statement. Universal hashing \cite{UniversalHash} is a widely studied concept which turned out especially useful in privacy amplification \cite{BennettUniversal}, a critical part in quantum key distribution protocols. Privacy amplification minimizes the amount of information an eavesdropper has with respect to a generated key. For that purpose the participants use so-called $universal_2$ function families. A family of functions $\mathcal{G} = \lbrace g_i \colon A \to B \rbrace_{i \in I}$ is said to be $universal_2$ if for any $x \neq y \in A$ the probability that $g_i(x) = g_i(y)$ is at most $1/|B|$ when $g_i$ is chosen uniformly at random from $\mathcal{G}$. \newline
One basic distillation step of the hashing protocol comprises the following steps: one participant draws a string $s \in \{0,1,2,3\}^{n}$ (which we also refer to as parity hash string) uniformly at random, corresponding to a universal hash function. Next, the participant classically communicates $s$ to the other participant and both perform, according to $s$, local operations and bilateral controlled NOTs on their parts of the quantum states. Depending on $s_t \in \lbrace 0,1,2,3 \rbrace$ they bypass ($s_t = 0$) or they accumulate either the amplitude bit $j$ ($s_t = 1$), the phase bit $i$ ($s_t = 2$) or both, amplitude and phase bit $i \oplus j$, ($s_t = 3$) for the Bell-pair $\ket{B_{ij}}$ indexed by $1 \leq t \leq n$ into the first pair for which $s_t \neq 0$ via a bilateral controlled NOT. Finally, they measure both parts of this target system using the $Z$ observable which reveals almost one bit of parity information about the remaining ensemble. This basic distillation step is iterated $n-m$ times, thereby collecting sufficient amount of information regarding parities about the remaining quantum systems. The parity information is finally used to restore the systems to the $\ket{B_{00}}^{\otimes m}$ state. For further detsails on the hashing protocol, we refer the reader to \cite{Be96}. \newline
If one considers instead of asymptotic ensembles an initial ensemble of finite size $n$, bipartite hashing can still be used to distill entanglement. For finitely many initial states  slightly fewer systems with a finite infidelity (i.e. there is a non-zero deviation relative to the state $\ket{B_{00}}^{\otimes m}$) will be distilled. More precisely, for finite size hashing the number of output systems is $m = n(1-S(\rho) - 2 \delta)$ where the tunable parameter $\delta$ characterizes the width of the likely subspace. The parameter $\delta$ turns out to be crucial when determining the rate of convergence towards $\ket{B_{00}}^{\otimes m}$ and we will choose for our purposes $\delta = n^{-1/5}$. \newline
There also exist extensions of the bipartite hashing protocol to a multipartite setting allowing the distillation of two colorable graph states \cite{Du05}, all graph states \cite{EPPallGraphs}, GHZ states \cite{ChenEPPGraph,ManevaEPP}, CSS states \cite{HostensEPPGraph} and stabilizer states \cite{GlancyEPPGraph,HostensEPPGraph2}. Conceptually those types of protocols rely on the same ideas as bipartite hashing. Again, local parity collecting operations are used to reveal information about the remaining ensemble. They are especially well-suited to distill resource states for measurement-based implementations of particular quantum tasks such as quantum error correction. \newline
In the main text we have shown the confidentiality of the hashing protocol for two colorable graph states \cite{Du05} and we provide a detailed description thereof within this supplementary material.

\subsection{Measurement-based implementation}\label{sec:meas:hash}

One alternative to the gate-based implementation of a quantum circuit is measurement-based quantum computation \cite{BriegelMeas2,BriegelMeas}. A quantum operation $\mathcal{O}$ can be implemented by coupling the input qubits via Bell measurements to a universal resource state, e.g. a 2D cluster state \cite{BriegelCluster}. For circuits which contain only gates from the Clifford group and Pauli measurements one can also use an optimized, special purpose resource state of minimal size \cite{RausMeasQCcluster}. This resource state will consist of only $n+m$ qubits for a circuit which maps $n$ qubits to $m$ qubits. Hashing protocols, like most other entanglement distillation protocols, belong to this class of circuits and thus allow for such a minimal size measurement-based implementation. The results of the Bell measurements at the read-in determine both the results of the parity measurements of the hashing protocol as well as the Pauli byproduct operators on the final output states. For more informations and examples see \cite{Zw12,Pirker16a}.

The noiseless implementation of the hashing protocol produces asymptotically perfect Bell-pairs. Therefore any eavesdropper is factored out, in the limit, guaranteeing perfect confidentiality. But even if i.i.d. local depolarizing noise acts on the quantum gates, any gate-based approach fails \cite{Zw14H}. This is due to the $O(n)$ bilateral CNOTs within every distillation round, which washes out all information from the initial states. Hence the gate-based implementation of hashing is limited to the noiseless scenario only. \newline 
This drawback is overcome by a measurement-based approach \cite{Zw14H}. A measurement-based implementation of the hashing protocol is rather straightforward: a sequence of parity hash strings is drawn uniformly at random by one participant and classically communicated to all other participants. They construct the corresponding resource state according to that particular sequence. This resource state is finally coupled to the initial states via Bell-measurements which implements the hashing protocol in a measurement-based fashion. \newline
Since all gates of the hashing protocol are elements of the Clifford group the resource states consist only of input and output qubits, see discussion above. This implies that the resource states are of minimal size and therefore optimal with respect to the number of qubits which need to be stored temporarily. \newline
In \cite{Zw14H} it was shown that a measurement-based implementation of the hashing protocol \cite{Be96} is capable of distilling entanglement for imperfect resource states and imperfect in-coupling Bell-measurements. There the resource states are affected by i.i.d. local depolarizing noise (LDN) of the form $\mathcal{D}(\alpha) = \prod^n_{l=1} \mathcal{D}_l(\alpha)$ acting on all qubits of the resource states where  
\begin{align}
\mathcal{D}_j(\alpha) \rho = \alpha \rho + \frac{1-\alpha}{4} (\rho + X_j \rho X_j + Y_j \rho Y_j + Z_j \rho Z_j) \label{eq:ldn:app}
\end{align}
and $\alpha$ characterizes the strength of the noise. In particular, the measurement-based implementation of hashing tolerates up to $7 \%$ of noise acting on each qubit of the resource state \cite{Zw14H}. In \cite{DurStandardNoise}, it was shown that any local noise process can be brought into a local depolarizing form. This observation also motivated the noise model of local depolarizing noise chosen in \cite{Zw13} to study measurement-based recurrence-type distillation protocols. There it was shown that the measurement-based implementation of recurrence-type distillation protocols is capable of tolerating up to $24 \%$ of noise acting on each qubit of the resource state. Furthermore, as studied in \cite{Wallnofer16}, local i.i.d. depolarizing noise provides an accurate and reasonable approximation if one generates the resource states via entanglement distillation. The generation of resource states via entanglement distillation also provides an efficient scheme to create high-fidelity resource states, crucial for accurate measurement-based quantum computation via resource states. \newline
The reason why a measurement-based implementation of the hashing protocol in the presence of i.i.d. LDN of the form $\mathcal{D}(\alpha)$ works is due to a fundamental observation made in \cite{Zw13}: If the resource states undergo a local depolarizing noise of the form $\mathcal{D}(\alpha) = \prod^n_{l=1} \mathcal{D}_l(\alpha)$ then one can {\it virtually} exchange the location of the LDN when followed by a Bell-measurement, i.e. $\mathcal{P} \mathcal{D}_1(\alpha) \rho = \mathcal{P} \mathcal{D}_2(\alpha) \rho$ where $\mathcal{P} \rho = P_B \rho P^\dagger_B$ and $P_B$ denotes a projector on a Bell-state. Intuitively speaking, as $P_B = \dm{B_{ij}}$, this is due the symmetry $(\id \otimes \sigma) \ket{B_{ij}} = (\sigma \otimes \id) \ket{B_{ij}}$ up to a global phase where $\sigma$ is a Pauli operator. This enables us to effectively move the noise acting on the input qubits of the resource states to the input state (as we couple the input state to the resource state via Bell-measurements). We emphasize that this holds for LDN of the form $\mathcal{D}(\alpha) = \prod^n_{l=1} \mathcal{D}_l(\alpha)$ and, more importantly, this can not be done within the circuit model even though the gate-based and measurement-based approach to quantum computation are computationally equivalent. In particular, computational equivalence does not necessarily imply equivalent robustness with respect to noise. This observation becomes more clear when one considers the noise processes as being part of the protocol. In the measurement-based scenario with resource states, the observation of \cite{Zw13} implies that the i.i.d. LDN acting on the input qubits of the resource state can effectively be moved to the initial states, see discussion above. The i.i.d. LDN acting on the output qubits can be applied afterwards, because the quantum computation at hand is performed in terms of Bell-measurements at the read-in. This leaves one with a perfect quantum operation on a modifed input state, where i.i.d. LDN is applied, followed by the noise process of the output qubits. In \cite{Zw15} this observation was applied to measurement-based quantum communication, where it was shown that very high error thresholds (of the order of 10 \% per qubit) can be obtained. In contrast, in the gate-based approach noise accumulates through repeatedly applying quantum gates. Furthermore, on commuting noise through the gates of a quantum circuit towards the input, the noise processes might get correlated due to commutation relations, maybe ending up in correlated noise rather than i.i.d. LDN acting on the input state. So to summarize, this observation shows that at least for i.i.d. LDN the measurement- and gate-based approach are not equivalent. \newline
To summarize, the measurement-based approach permits a noisy implementation of the hashing protocol whereas a standard gate-based implementation fails in the presence of noise.

\section{Noise regimes}\label{app:noiseregimes}

In the main text we identified two different regimes of i.i.d. LDN of the form $\mathcal{D}(\alpha) = \prod^n_{l=1} \mathcal{D}_l(\alpha)$, where $\mathcal{D}_l(\alpha)$ is defined via (\ref{eq:ldn:app}), acting on the resource states of the measurement-based implementation of hashing: privacy and purification regime. Within the first regime any eavesdropper factors out but no entanglement will be distilled. In particular, for bipartite hashing, the fidelity relative to $\ket{B_{00}}$ will decrease due to the protocol. In contrast, in the purification regime any eavesdropper is factored out and entanglement is distilled, i.e. the fidelity relative to the target state increases. \newline
To see this we recall the conditions on the noise parameters for purification and privacy. The noiseless hashing protocol distills perfect Bell pairs in the asymptotic limit of infinitely many initial states in Werner form as soon as their fidelity exceeds $F_{\mathrm{crit}} = 0.8107$, see \cite{Be96}. In this case the final Bell pairs are private (and thus confidentiality is guaranteed) and $F_{\mathrm{crit}}$ can be translated to $q_{\mathrm{crit}} = (4 F_{\mathrm{crit}} - 1)/3 \approx 0.7476$. In the noisy case one has two conditions for the noise parameters $\alpha$ and $q$, which quantify the level of noise on the resource states and the fidelity of the initial states, respectively (see also \cite{Zw13}) for asymptotic ensemble sizes:
\begin{align}
\label{sec}
\alpha^2q > q_{\mathrm{crit}}
\end{align}
and
\begin{align}
\label{pur}
\alpha^2 > q.
\end{align}
Here, (\ref{sec}) guarantees that the fidelity of the initial states, after the noise from the resource state is mapped to the initial states, see the previous section and \cite{Zw13}, exceeds the threshold value $q_{\mathrm{crit}}$. In this case the output pairs will be private. The second condition, (\ref{pur}), ensures that the fidelity of the output pairs is larger than the fidelity of the input pairs.
From this one sees that for privacy one only needs to fulfill (\ref{sec}), whereas both (\ref{sec}) and (\ref{pur}) need to hold for purification. Observe that (\ref{sec}) is a condition due to the noise acting on the input qubits (thereby increasing the required fidelity of the initial states to succeed hashing) whereas condition (\ref{pur}) stems from the noise applied to the output qubits (which depolarizes the perfect Bell-pairs produced by noiseless hashing in the asymptotic limit). This means that the parameters $\alpha$ and $q$ are more constrained if one aims for increasing entanglement, as compared to the case of privacy. We summarize these findings in Fig. \ref{fig:secreg}. \newline
This observation provides a clear distinction between privacy and purification regime for asymptotic ensembles: Both regimes, purification and privacy, have in common that any eavesdropper factors out due to the protocol but they differ with respect to whether entanglement is distilled or not. This motivates the term quantum privacy distillation for the proposed overall protocol as there are noise regimes where the protocol offers privacy, or equivalently private entanglement, without achieving distillation.

A similar situation arises in the finite size case. Here, the modifications will be that $q_{\mathrm{crit}}$ in (\ref{sec}) is no longer directly related to $F_{\mathrm{crit}}$ and that (\ref{pur}) needs to be modified to
\begin{align}
\alpha^2 q_{\mathrm{out}}(n,F) > q_{\mathrm{in}}.
\end{align}
Here, $q_{\mathrm{out}}(n,F)$ quantifies the level of noise on the output pairs of the hashing protocol for $n$ initial states with fidelity $F$. It can be obtained from the bound on the fidelity of the output pairs. There will again be two different regimes, and the purification regime will be smaller than the privacy regime due to the fact that it is more constrained (there are two inequalities to be satisfied, whereas there is only one for confidentiality).

\begin{figure}[h!]
\scalebox{0.4}{
\includegraphics{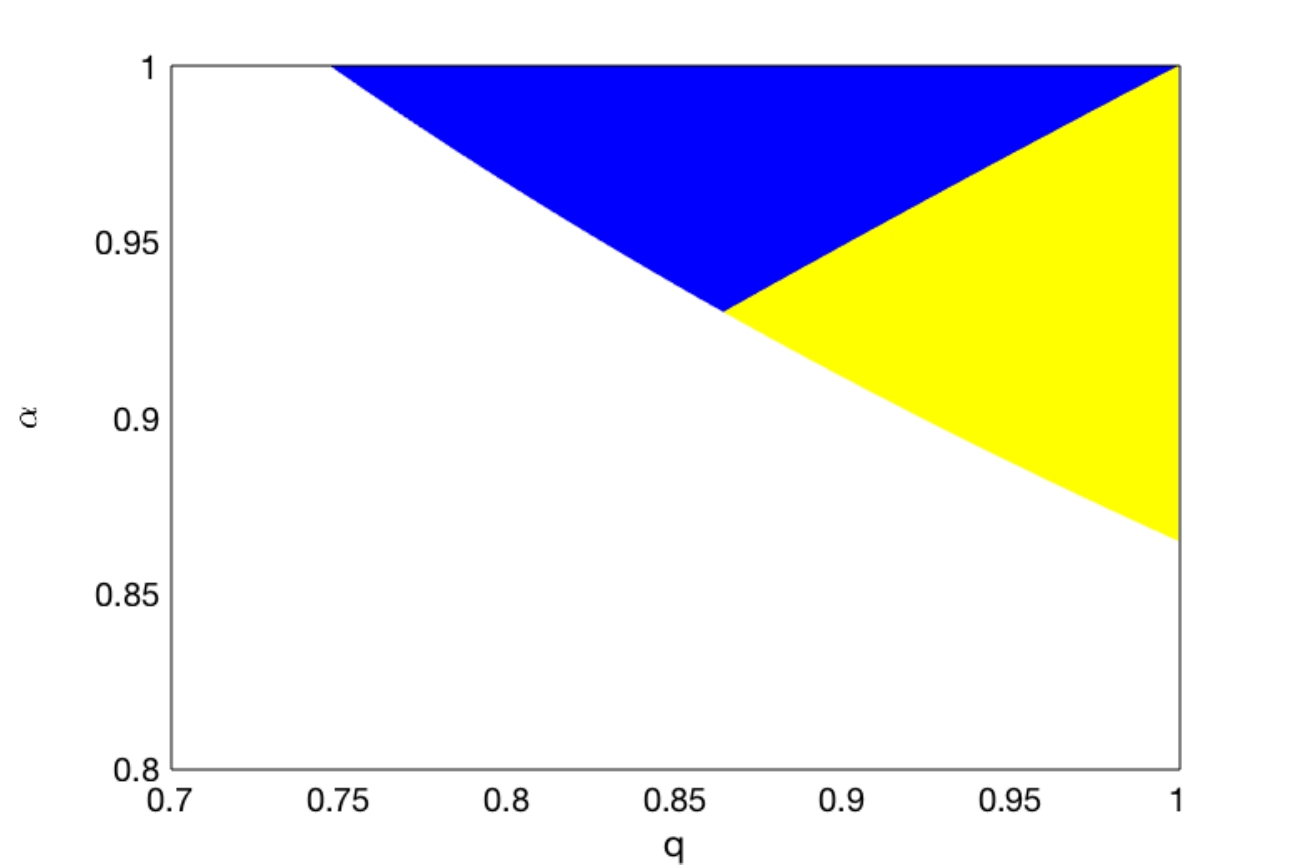}
}
\caption[h!]{\label{fig:secreg} Visualization of the different regimes in the $\alpha - q$ plane (only the upper right corner of the entire plane is shown). In the white area neither privacy nor purification is achieved. In the entire colored area privacy is guaranteed, but only in the blue area one has distillation. This means that there is a parameter regime (yellow area), where one has privacy despite the fact that the fidelity of the Bell pairs does not increase during the distillation.}
\end{figure}

\section{Rate of convergence of noiseless bipartite hashing for i.i.d. initial states}\label{app:bipartite:iid}

Here we provide the proof of Eq. (\ref{eq:iid:finite}) of the main text for $\delta = n^{-1/5}$ summarized within the following Theorem. 
\begin{theo} [Convergence for i.i.d. initial states]
Let $\E$ be the real protocol and $\F$ the ideal protocol taking $n + kn$ initial states. Furthermore, let $x_1(\delta) = 1/a_{\mathrm{max}} \left[ \left(g_{\mathrm{max}} + \delta \right) \log \left(1+\frac{\delta}{g_{\mathrm{max}}} \right) - \delta \right]$ where $a_{\mathrm{max}}$ and $g_{\mathrm{max}}$ are  constants depending on $F_{\mathrm{min}}$ and $F_{\mathrm{max}}$. Then we have for all initial states $\rho$ that
\begin{align}
\|\E & (\rho^{\otimes n + kn}) - \F (\rho^{\otimes n + kn}) \|_1 \notag \\
& \leq 2 \left[2 \exp(-n x_1(n^{-1/5})) + \exp\left(-n^{4/5} \ln 2\right) \right. \notag \\
& \left. + 2 \exp\left(-(F_{\mathrm{max}} - F_{\mathrm{min}})^2 kn/16 \right) \right]. 
\end{align}
Furthermore, the right-hand side of Eq. (\ref{eq:iid:finite}) of the main text approaches exponentially fast zero.
\end{theo}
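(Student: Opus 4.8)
The statement bounds the trace-distance between the real hashing protocol $\E$ and the ideal map $\F$ on $n+kn$ i.i.d.\ copies $\rho^{\otimes n+kn}$, with the specific choice $\delta = n^{-1/5}$. The plan is to first reduce the distance to the $ok$-branch only, then decompose the failure of the $ok$-branch into three independent events and bound each separately.

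First I would observe, as in the main text, that $\E$ and $\F$ coincide on the $fail$-branch (both output $\sigma^{\perp}_{PE}\otimes \dm{fail}_f$ with the same probability), so
$\|\E(\rho^{\otimes n+kn}) - \F(\rho^{\otimes n+kn})\|_1$ equals the contribution from the $ok$-branch, which is at most $2$ times the probability that the real protocol, conditioned on not aborting, fails to produce the target state $\dm{\varphi}^{\otimes m}$. This yields the overall factor of $2$ in front. The remaining task is to bound the conditional error probability by the bracketed sum of three exponentials.

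Next I would enumerate the three ways the real protocol can deviate from the ideal output. (i) \emph{Parameter estimation fails}: the sacrificed $kn$ systems give an estimate $\overline F$ inside $[F_{\mathrm{min}},F_{\mathrm{max}}]$ while the true fidelity is outside, or vice versa — a Hoeffding-type bound on the empirical mean of $kn$ i.i.d.\ bounded random variables gives the term $2\exp(-(F_{\mathrm{max}}-F_{\mathrm{min}})^2 kn/16)$. (ii) \emph{The Bell-pair ensemble leaves the likely subspace}: using the likely-subspace estimate of \cite{Schumacher,Be96} with width parameter $\delta$, together with the fact that after twirling and conditioning on the estimated fidelity window the relevant single-copy probability distribution has entropy bounded by constants $g_{\mathrm{max}}$ (and an $a_{\mathrm{max}}$ arising from the variance/Chernoff exponent), the probability of atypicality is at most $2\exp(-n x_1(\delta))$ with $x_1$ as defined; substituting $\delta=n^{-1/5}$ gives $2\exp(-n x_1(n^{-1/5}))$. (iii) \emph{Hash-string collision}: within the likely subspace of size $\le 2^{n(S+\delta)}$, after collecting $n-m = n(S+2\delta)$ parity bits via random $universal_2$ hash functions, the probability that two distinct sequences collide is at most $2^{n(S+\delta)}\cdot 2^{-n(S+2\delta)} = 2^{-n\delta}$; with $\delta=n^{-1/5}$ this is $2^{-n^{4/5}} = \exp(-n^{4/5}\ln 2)$. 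Summing (i)--(iii) and multiplying by $2$ gives the claimed bound.

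Finally, for the asymptotic claim I would note that each of the three exponents tends to $+\infty$: the first and third terms are manifestly $\exp(-\Theta(n))$ and $\exp(-\Theta(n^{4/5}))$; for the second, a short expansion of $x_1(\delta)$ for small $\delta$ gives $x_1(\delta) \approx \delta^2/(2 a_{\mathrm{max}} g_{\mathrm{max}}) + O(\delta^3)$, so $n x_1(n^{-1/5}) \sim n^{3/5}/(2 a_{\mathrm{max}} g_{\mathrm{max}}) \to \infty$, hence the right-hand side vanishes exponentially fast (in a stretched-exponential sense) as $n\to\infty$.

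\textbf{Main obstacle.} The delicate step is (ii): one must carefully extract the constants $a_{\mathrm{max}}, g_{\mathrm{max}}$ from the conditioning on the parameter-estimation window — the likely-subspace bound of \cite{Be96} is an asymptotic statement, so it has to be replaced by an explicit Chernoff/large-deviation estimate valid at finite $n$, and one must check that the worst-case entropy over the admissible fidelity interval $[F_{\mathrm{min}},F_{\mathrm{max}}]$ (rather than the true $S(\rho)$) enters the output count $m$ and the exponent consistently. The choice $\delta = n^{-1/5}$ is then what balances the polynomial prefactor from the post-selection technique (Theorem~\ref{theo:postselect}) against these exponential decays in the subsequent application; verifying it is optimal in the relevant regime is a straightforward but slightly fiddly optimization.
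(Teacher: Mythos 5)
Your plan reproduces the paper's proof almost step for step: reduce to the $ok$-branch (the two maps agree on the $fail$-branch, giving the overall factor $2$), split the deviation into the three events — atypicality with respect to the likely subspace, a hash/parity collision bounded by $2^{-n\delta}$, and parameter estimation accepting when it should abort — bound each separately, insert $\delta=n^{-1/5}$, and show exponential decay of the first term by a small-$\delta$ expansion of $x_1$; the paper's rigorous version of your expansion uses $\log(1+x)>x/(1+x/2)$ to get the same $n^{3/5}$ scaling, so that part is fine.

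The one substantive piece you leave open is exactly the step you flag as the "main obstacle," and it is not cosmetic: the specific exponent $x_1(\delta)=\frac{1}{a_{\mathrm{max}}}\left[(g_{\mathrm{max}}+\delta)\log\left(1+\frac{\delta}{g_{\mathrm{max}}}\right)-\delta\right]$ is the Bennett-inequality exponent \cite{BennettInequ}, not a generic Chernoff bound. The paper applies Bennett to $X=-\log_2 p_{kl}-S(F)$ (for Werner-form states after twirling), with $a(F)=|\log_2((1-F)/3)|+S(F)$ the almost-sure bound and $g(F)=V(F)/a(F)$ the variance-to-bound ratio, and then removes the dependence on the unknown $F$ by a monotonicity argument: the map $x\mapsto(x+b)\log(1+b/x)-b$ is decreasing, $g(F)$ is decreasing on the admissible window, so the worst case is $g_{\mathrm{max}}=g(F_{\mathrm{min}})$ and $a_{\mathrm{max}}=a(F_{\mathrm{max}})$. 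Your description of $g_{\mathrm{max}}$ as an entropy bound is therefore slightly off (it is a variance/bound ratio), and without the Bennett step plus this worst-casing over $[F_{\mathrm{min}},F_{\mathrm{max}}]$ the claimed form of $x_1$ is asserted rather than derived. A minor further point: the constant $16$ in the parameter-estimation term comes from the paper's concrete scheme — Hoeffding on $kn/2$ pairwise estimates obtained from $X\otimes X$ and $Z\otimes Z$ measurements on pairs of Bell pairs, with acceptance window of half-width $\Delta/4$ — so "Hoeffding on $kn$ bounded variables" alone does not pin down that constant.
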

\begin{proof}
Because the ideal and the real map are identical in the aborting branch, we find for the initial states $\rho^{\otimes n + kn}$ that 
\begin{align}
\|\E (\rho^{\otimes n + kn}) & - \F (\rho^{\otimes n + kn}) \|_1 \notag \\
& = p_{\rho} \|\sigma_{AB} - \dm{B_{00}}^{\otimes m} \| \leq \varepsilon_{\text{H}} \label{eq:conf:local}
\end{align}
where $\sigma_{AB}$ denotes the state of the hashing protocol after $n-m$ rounds and $p_{\rho}$ the success probability for initial state $\rho$. Thus we need to estimate $\varepsilon_{\text{H}}$. Because we twirl the initial states towards Werner form we assume from now on that they are of Werner form. \newline
The hashing protocol can fail due to two reasons, see  \cite{Be96}: the string corresponding to the initial states falls outside the likely subspace or, after $n-m$ rounds two or even more configurations are compatible with the total parity information, i.e. they can not be distinguished from each other. \newline
By denoting this failure probabilities by $p'_1$ and $p'_2$ and the corresponding states after the protocol by $\sigma_1$ and $\sigma_2$ respectively, we find that the total failure probability $p'_f$ of the hashing protocol satisfies $p'_f = p'_1 + p'_2$. We also observe that if the parameter estimation was accurate the state after the protocol completes, i.e. $\sigma_{AB}$ of (\ref{eq:conf:local}), is given by  
\begin{align}
\sigma_{AB} = (1-p'_f) \dm{B_{00}}^{\otimes m} + \sum^2_{i=1} p'_i \sigma_i. \label{eq:conf:local:bi:finalstate}
\end{align}
More precisely, with probability $1-p'_f$ we are able to restore the output of the hashing protocol to $m$ copies of $\ket{B_{00}}$ and we end up with probabilities $p'_1$ and $p'_2$ in the state $\sigma_1$ and $\sigma_2$ respectively. This implies for (\ref{eq:conf:local}) that 
\begin{align}
\|\sigma_{AB} - \dm{B_{00}}^{\otimes m}\|_1 \leq 2(p'_1 + p'_2) \label{eq:conf:local:ok:1}
\end{align}
via the triangle inequality for the case whenever parameter estimation is accurate. \newline
Additionally the overall protocol can fail due to the following observation: The parameter estimation provides an estimate $\overline{F}$ for the fidelity $F$ which is accepted by the participants, but $F$ is actually outside the agreed range $[F_{\mathrm{min}},F_{\mathrm{max}}]$. In that case Alice and Bob run hashing even though the protocol will either fail (since the initial fidelity is too low) or the fidelity is too high to provide accurate confidentiality estimates \footnote{The hashing protocol requires $F > F_{\mathrm{crit}}$ where $F_{\mathrm{crit}} = 0.8107$ to distill entanglement from the initial states. The restriction that $F < F_{\mathrm{max}}$ is due to the applicability of Bennett's inequality which requires bounded random variables. However, for the noisy implementation of the hashing protocol this criterion will be met automatically as the resource states for the measurement-based implementation undergo an i.i.d. LDN process.}. This observation in turn implies that the state after hashing within the ok-branch is maximum far from the asymptotic state of the hashing protocol, i.e. 
\begin{align}
\|\sigma_{AB} - \dm{B_{00}}^{\otimes m}\|_1 \leq 2. \label{eq:pefail:state:ok}
\end{align}
Nevertheless, the probability of the protocol succeeding for initial state $\rho$ also takes into account for  parameter estimation succeeding, i.e. $p_{\rho} = p'_3 \cdot p'$ where $p'_3$ denotes the probability of parameter estimation succeeding for initial state $\rho$. Therefore, if Alice and Bob mistakenly run hashing even if they should have aborted we find via (\ref{eq:pefail:state:ok}) for (\ref{eq:conf:local}) that 
\begin{align}
p_{\rho} \|\sigma_{AB} - \dm{B_{00}}^{\otimes m}\|_1 \leq 2 p'_3. \label{eq:pefail:ok}
\end{align}
So to summarize we obtain for an arbitrary initial state $\rho$ by combining (\ref{eq:conf:local:ok:1}) and (\ref{eq:pefail:ok}) that 
\begin{align}
p_{\rho} \|\sigma_{AB} - \dm{B_{00}}^{\otimes m}\|_1 \leq 2(p'_1 + p'_2 + p'_3). \label{eq:conf:local:ok:2}
\end{align}
Thus we are left to provide upper bounds for (the unknown) probabilities $p'_1$, $p'_2$ and $p'_3$ respectively, i.e. we need to find $p_1$, $p_2$ and $p_3$ such that $p'_i \leq p_i$ for $1 \leq i \leq 3$ because this implies for (\ref{eq:conf:local:ok:2}) that 
\begin{align}
p_{\rho} \|\sigma_{AB} - \dm{B_{00}}^{\otimes m}\|_1 \leq 2(p_1 + p_2 + p_3). \label{eq:conf:local:ok}
\end{align}
 \newline
We derive a bound for the probability of falling outside the likely subspace $p_1$ via the Bennett inequality \cite{BennettInequ}. Bennett's inequality \cite{BennettInequ}  states that we have for $X_1,..,X_n$ independent random variables, where $|X_i| \leq a$ almost-surely and the expected value of $X_i$ is zero w.l.o.g., that
\begin{align}
\mathrm{Pr} \left(\left|\sum^n_{i=1} X_i \right| > t \right) \leq 2 \exp\left( - \frac{n \sigma^2}{a^2} h\left(\frac{at}{n\sigma^2} \right) \right) \label{eq:bennett}
\end{align}
where $\sigma^2 = 1/n \sum^n_{i=1} \mathrm{Var} X_i$ and $h(u)=(1+u)\log(1+u)-u$ \footnote{Observe that Bennett's inequality is only applicable to bounded random variables, which is also the reason why we propose to accept only initial states where the fidelity is within an agreed range $[F_{\mathrm{min}}, F_{\mathrm{max}}]$.}. \newline
For the hashing protocol the random variables $X_i$ take the values  $X_i(k,l) := -\log_2 p_{kl} - S(\rho)$ where $\rho = \sum^1_{k,l=0} p_{kl} |B_{kl}\rangle \langle B_{kl}|$ and $S(\rho) = - \sum^1_{k,l=0} p_{kl} \log_2 p_{kl}$ denotes the von-Neumann entropy. The von-Neumann entropy simplifies for states in Werner form to $S(\rho) = -F \log_2 (F) - (1-F) \log_2 ((1-F)/3) =: S(F)$. \newline
The i.i.d. assumption implies that all $X_i$ are independent and identical distributed (therefore we will subsequently denote them by the random variable $X$), thus we find $\sigma^2 = 1/n \sum^n_{i=1} \mathrm{Var} X_i = \mathrm{Var} X =: V(F)$. Hence we have 
\begin{align}
V(F) &= \mathrm{Var} X = \sum_{k,l} p_{kl} (-\log_2 p_{kl} - S(F))^2 \notag \\
&= \sum_{k,l} p_{kl} (\log^2_2 p_{kl} + 2 S(F) \log_2 p_{kl} + S^2(F)) \notag \\
&= \sum_{k,l} p_{kl} \log^2_2 p_{kl} + 2 S(F) p_{kl} \log_2 p_{kl} + p_{kl} S^2(F) \notag \\
&= \sum_{k,l} p_{kl} \log^2_2 p_{kl} + 2 S(F)(-S(F)) + S^2(F) \notag \\
&= F \log^2_2 F + (1-F) \log^2_2 ((1-F)/3) - S^2(F).
\end{align}
We observe that the random variable $X$ is bounded. More precisely, we have $|X(k,l)| = |\log_2 p_{kl} + S(F)| \leq |\log_2((1-F)/3)| + S(F) =: a(F)$ because $|\log_2 ((1-F)/3)| > |\log_2 F|$ for $F > 0.8107$ (which is the minimum required fidelity for Werner states by the hashing protocol). \newline
The next step is to insert $t=n\delta$, $a = a(F)$ and $\sigma^2 = V(F)$ in  (\ref{eq:bennett}) which yields by denoting the left-hand-side of (\ref{eq:bennett}) by $p_1$
\begin{align}
& p_1 \leq 2 \exp \left(\frac{-n V(F)}{a^2(F)} h \left(\frac{a(F)n \delta}{n V(F)} \right) \right) \notag  \\
&= 2 \exp \left \lbrace \frac{-n V(F)}{a^2(F)} \left[ \left(1 + \frac{a(F) \delta}{V(F)} \right) \log \left(1+\frac{a(F) \delta}{V(F)} \right) \right. \right. \notag  \\
& \left. \left. - \frac{a(F) \delta}{V(F)} \right] \right \rbrace \notag \\ 
&= 2 \exp \left \lbrace \frac{-n}{a(F)} \left[ \left(\frac{V(F)}{a(F)} + \delta \right) \log \left(1+\frac{a(F) \delta}{V(F)} \right) - \delta \right] \right \rbrace.
\end{align}
By defining $g(F) = \frac{V(F)}{a(F)}$ we rewrite the previous inequality as
\begin{align}
p_1 \leq 2 \exp \left \lbrace \frac{-n}{a(F)} \left[ \left(g(F) + \delta \right) \log \left(1+\frac{\delta}{g(F)} \right) - \delta \right] \right \rbrace. \label{eq:likely:outside:1}
\end{align}
We observe that (\ref{eq:likely:outside:1}) depends on the fidelity $F$ of the initial states which is inappropriate for confidentiality estimates. In order to obtain a bound which is independent of the fidelity of the initial states we use that Alice and Bob only run the hashing protocol if $F \in [F_{\mathrm{min}}, F_{\mathrm{max}}]$. \newline 
We observe that (\ref{eq:likely:outside:1}) is maximized whenever $\frac{n}{a(F)} [\left(g(F) + \delta \right) \log \left(1+\frac{\delta}{g(F)} \right) -\delta]$ is minimal because $\left(g(F) + \delta \right) \log \left(1+\frac{\delta}{g(F)} \right) - \delta \geq 0$ which follows from $\log(1+x) \geq \frac{x}{x+1}$, $n > 0$ and $a(F) > 0$. \newline
For that purpose we show that the function $y(x)=(x+b)\log(1+b/x) - b = (x+b)(\log(x+b) - \log(x)) - b$ is strictly monotonic decreasing in $x$. We obtain for the first derivative of $y$ that
\begin{align}
y'(x) &= \log(x+b) + \frac{x+b}{x+b} - \left(\log(x) + \frac{x+b}{x} \right) \notag \\
& = \log(x+b) + 1 - \log(x) - 1 - \frac{b}{x} \notag \\
& = \log\left(\frac{x+b}{x}\right) - \frac{b}{x} = \log\left(1+\frac{b}{x}\right) - \frac{b}{x} \leq 0
\end{align}
since $\log(1+z) \leq z$. Thus $\left(g(F) + \delta \right) \log \left(1+\frac{\delta}{g(F)} \right) - \delta \to \mathrm{min}$ whenever $g(F) \to \mathrm{max}$. 
\begin{figure}[h!]
\includegraphics[width=\columnwidth]{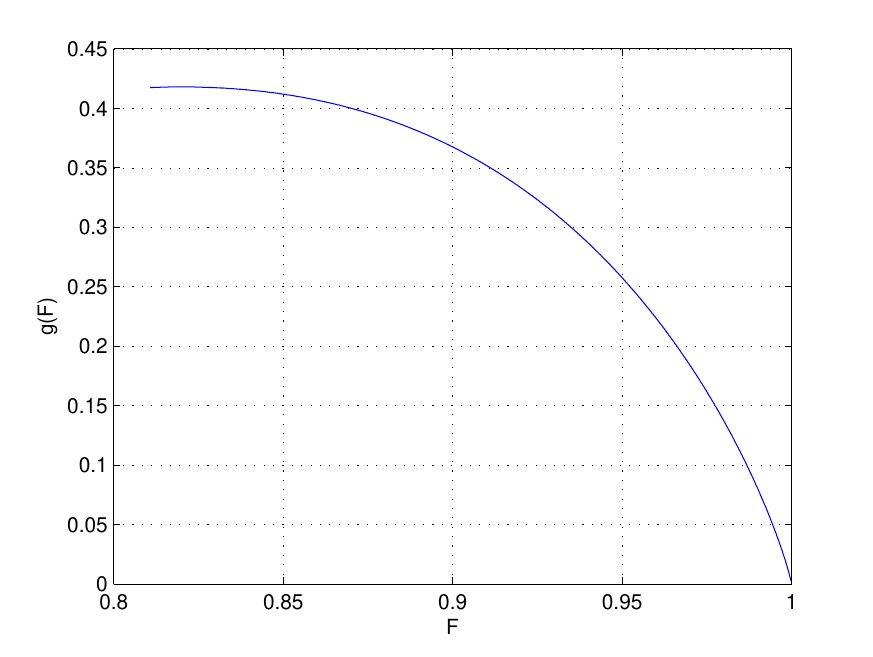} 
\caption[h!]{Plot of the function $g(F)$. Observe that $g$ is strictly monotonic decreasing for $F \in [0.82, 1)$.}\label{fig:g}
\end{figure} \newline
From Figure \ref{fig:g} we see that $g(F) \to \mathrm{max}$ for $F \to \mathrm{min}$. This implies for (\ref{eq:likely:outside:1}) that 
\begin{align}
p_1 \leq 2 \exp & \left \lbrace \frac{-n}{a(F)} \right. \notag \\
& \left. \left[ \left(g(F_{\mathrm{min}}) + \delta \right) \log \left(1+\frac{\delta}{g(F_{\mathrm{min}})} \right) - \delta \right] \right \rbrace. 
 \label{eq:likely:outside:3}
\end{align}
Consequently $\left(g(F_{\mathrm{min}}) + \delta \right) \log \left(1+\frac{\delta}{g(F_{\mathrm{min}})} \right) - \delta \geq 0$ and $a(F) \leq a(F_{\mathrm{max}})$ implies 
\begin{align}
p_1 \leq 2 \exp \left \lbrace \frac{-n}{a_{\mathrm{max}}} \left[ \left(g_{\mathrm{max}} + \delta \right) \log \left(1+\frac{\delta}{g_{\mathrm{max}}} \right) - \delta \right] \right \rbrace \label{eq:likely:outside:4}
\end{align} 
where $a_{\mathrm{max}} = a(F_{\mathrm{max}})$ and $g_{\mathrm{max}} = g(F_{\mathrm{min}})$. We rewrite (\ref{eq:likely:outside:4}) in a more compact form by defining $x_1(\delta) = 1/a_{\mathrm{max}} \left[ \left(g_{\mathrm{max}} + \delta \right) \log \left(1+\frac{\delta}{g_{\mathrm{max}}} \right) - \delta \right]$ and inserting $\delta=n^{-1/5}$ \footnote{The choice of $\delta$ is a trade-off between the rate of convergence and the number of output system $m = n(1-S(\rho)-2\delta)$. Any choice $\delta < n^{-1/4}$ is appropriate.} as 
\begin{align}
p_1 \leq 2 \exp(-n x_1(n^{-1/5})). \label{eq:likely:outside}
\end{align}
We will use (\ref{eq:likely:outside}) for the confidentiality estimate (\ref{eq:conf:local:ok}). In order to show that (\ref{eq:likely:outside}) ensures an exponential convergence, as we claim, we need to provide an upper bound for the exponent of (\ref{eq:likely:outside}), i.e., for the function
\begin{align}
-\frac{n}{a_{\mathrm{max}}} \left[ \left(g_{\mathrm{max}} + \delta \right) \log \left(1+\frac{\delta}{g_{\mathrm{max}}} \right) - \delta \right] \label{eq:expconv:1}
\end{align} 
where $\delta$ will be choosen later as $n^{-1/5}$ as previously. By defining 
\begin{align}
f(n) = n \left((g_{\mathrm{max}}+\delta)\log\left(1+ \frac{\delta}{g_{\mathrm{max}}}\right) - \delta \right) \label{eq:expconv:f}
\end{align}
(\ref{eq:expconv:1}) reads as $-f(n)/a_{\mathrm{max}}$. In the following we compute a lower bound $y(n)$ for $f(n)$, i.e. $f(n) > y(n)$ for all $n$, which is in turn an upper bound for (\ref{eq:likely:outside}), i.e. $p_1 \leq 2 \exp(-f(n)/a_{\mathrm{max}}) \leq 2 \exp(-y(n)/a_{\mathrm{max}})$. Using that $\log(1+x) > \frac{x}{1 + x/2}$ for $x>0$, see \cite{Love}, we find from $g_{\mathrm{max}} > 0$ and $\delta>0$ that 
\begin{align}
\log\left(1+ \frac{\delta}{g_{\mathrm{max}}}\right) > \frac{\frac{\delta}{g_{\mathrm{max}}}}{1+\frac{\delta}{2g_{\mathrm{max}}}} = \frac{2\delta}{2g_{\mathrm{max}} + \delta}. \label{eq:expconv:2}
\end{align}
Furthermore we have that $(g_{\mathrm{max}}+\delta)\log\left(1+ \frac{\delta}{g_{\mathrm{max}}}\right) - \delta \geq 0$ which implies together with (\ref{eq:expconv:2}) for (\ref{eq:expconv:f})
\begin{align}
f(n) & >  n \left((g_{\mathrm{max}}+\delta)\frac{2\delta}{2g_{\mathrm{max}} + \delta} - \delta \right) \notag \\
& = n \frac{2\delta (g_{\mathrm{max}}+\delta) - (2g_{\mathrm{max}} + \delta) \delta}{2g_{\mathrm{max}} + \delta} \notag \\
& = n \frac{2\delta g_{\mathrm{max}} + 2 \delta^2 - 2g_{\mathrm{max}} \delta - \delta^2}{2g_{\mathrm{max}} + \delta} \notag \\
& = \frac{n \delta^2}{2g_{\mathrm{max}} + \delta} \geq \frac{n \delta^2}{2g_{\mathrm{max}} + 1}
\end{align}
because $\delta \leq 1$. Inserting $\delta = n^{-1/5}$ finally gives
\begin{align}
f(n) >  \frac{n \delta^2}{2g_{\mathrm{max}} + 1} = \frac{n^{3/5}}{2g_{\mathrm{max}} + 1} =: y(n)
\end{align}
implying
\begin{align}
p_1 &\leq 2\exp(-f(n)/a_{\mathrm{max}}) \leq 2\exp(-y(n)/a_{\mathrm{max}}) \notag \\
& = 2\exp\left(-\frac{n^{3/5}}{a_{\mathrm{max}}(2g_{\mathrm{max}} + 1)} \right)
\end{align}
which analytically proves the exponential scaling of the hashing protocol. \newline
Furthermore, following the approach of \cite{Be96}, we find that the probability of having two configurations which are compatible with the collected parity information, $p_2$, is bounded by $2^{-n \delta}$. Thus, inserting $\delta = n^{-1/5}$ gives $p_2 < 2^{- n^{4/5}}$. \newline
Finally we provide an estimate for the probability of accepting initial states from Eve in the case when Alice and Bob should abort the protocol after parameter estimation, i.e. the actual fidelity $F$ is below the minimum required value $F_{\mathrm{min}}$ but the estimate $\overline{F}$ is not, or the actual fidelity $F$ is above $F_{\mathrm{max}}$ but the estimate $\overline{F}$ is not, corresponding to the probability $p'_3$. For that purpose we perform two-qubit measurements of two Bell-pairs, the first w.r.t. the $X \otimes X$ and the second w.r.t. the $Z \otimes Z$ observable. One easily observes that $\ket{B_{00}}$ is the common $+1$ eigenstate of both operators. By referring to this measurements as $M_1$ and $M_2$ respectively and recalling that the parameter estimation utilizes $kn$ systems we define the random variables $F_i$ associated with a pair of Bell-pairs for $1 \leq i \leq kn/2$ which is equal to $1$ whenever $M_1$ and $M_2$ simultaneously reveal outcome $1$ and $0$ otherwise. \newline
Recall that the Hoeffding inequality \cite{Hoeffding} states that we have for $X_1,..,X_n$ i.i.d. random variables where $a_i \leq X_i \leq b_i$, $c_i = b_i - a_i$, $S_n = \sum_i X_i$ and the expected value $E_n$ of $S_n$, i.e. $E_n = \mathrm{E} [S_n]$, that
\begin{align}
\P \left(|S_n - E_n| > t \right) < 2 \exp\left( - \frac{2t^2}{n C^2} \right) \label{eq:hoeffding}
\end{align}
holds for all $t$ and where $\forall i: c_i \leq C$. Hoeffding's inequality (\ref{eq:hoeffding}) implies now for the empirical mean $\overline{F} = 2/(kn) \sum^{kn/2}_{i=1} F_i$ that
\begin{align}
\mathrm{Pr}(|\overline{F} - \mathrm{E}[F]| > \eta) < 2 \exp(-\eta^2 kn) \label{eq:pefail:eta}
\end{align}
holds for all $\eta$. More precisely, the probability of estimating an error larger than $\eta$ via $\overline{F}$ to $\mathrm{E}[F]$ is decaying exponential in $n$. So Alice and Bob choose $F_{\mathrm{min}}$ and $F_{\mathrm{max}}$ and they agree to continue with the hashing protocol whenever $\overline{F} \in [F_{\mathrm{PE}} - \Delta/4, F_{\mathrm{PE}} + \Delta/4]$ where $F_{\mathrm{PE}} = (F_{\mathrm{min}} + F_{\mathrm{max}})/2$ and $\Delta = F_{\mathrm{max}} - F_{\mathrm{min}}$. Fixing $\eta = \Delta/4$ implies for (\ref{eq:pefail:eta}) that
\begin{align}
\mathrm{Pr}(|\overline{F} - \mathrm{E}[F]| > \eta) < 2 \exp(-(F_{\mathrm{max}} - F_{\mathrm{min}})^2 kn/16). \label{eq:pefail}
\end{align}
In other words, (\ref{eq:pefail}) means that the probability that Alice and Bob continue with the hashing protocol in case they should abort, i.e., the actual fidelity $F$ is outside $[F_{\mathrm{min}}, F_{\mathrm{max}}]$, is exponentially small. For example, if the fidelity estimate $\overline{F}$ is $\overline{F} = F_{\mathrm{PE}} + \Delta/4$ (which implies Alice and Bob will run hashing), then the probability that the actual fidelity $F$ satisfies $F > F_{\mathrm{PE}} + \Delta/2 = F_{\mathrm{max}}$ is exponentially bounded. \newline
To summarize, we find for (\ref{eq:conf:local}) that
\begin{align}
\|\E & (\rho^{\otimes n + kn}) - \F (\rho^{\otimes n + kn}) \|_1 \notag \\
& \leq 2 \left[2 \exp(-n x_1(n^{-1/5})) + \exp\left(-n^{4/5} \ln 2\right) \right. \notag \\
& \left. + 2 \exp\left(-(F_{\mathrm{max}} - F_{\mathrm{min}})^2 kn/16 \right) \right]. \label{eq:iid:finite:1}
\end{align}
Notice that the right-hand side of (\ref{eq:iid:finite:1}) is independent of $\rho$, which completes the proof.
\end{proof}

\section{Local closeness implies global closeness}\label{app:bipartite:localclose}

In the main text we formulated the following claim: If the output of the real and ideal map differ at most $\varepsilon$ for a particular initial state then they differ at most $4 \sqrt{\varepsilon}$ for any purification of this initial state. We prove this statement within the following Lemma.
\begin{lemma}\label{lem:localcloseness}
Let $\E$ be the real and $\F$ be the ideal protocol. Furthermore let $\rho$ be a mixed state shared by the participants of the protocol. If $\|\E(\rho) - \F(\rho)\|_1 \leq \varepsilon$, then
\begin{align}
\| (\E \otimes id_{E} - \F \otimes id_{E})(\ketbra{\psi}{\psi}{ABE}) \|_1 \leq 4 \sqrt{\varepsilon}
\end{align}
for all purifications $\ket{\psi}_{ABE}$ of $\rho$.
\end{lemma}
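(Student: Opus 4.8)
The plan is to reduce the statement about purifications to the hypothesis about the mixed state via two standard tools: Uhlmann's theorem and the Fuchs--van de Graaf inequalities relating trace distance and fidelity. The key observation is that $\E$ and $\F$, being physical maps on the $AB$ systems, act trivially on $E$, so controlling their outputs on the purification amounts to controlling how much the purifying system ``sees'' the difference between $\E(\rho)$ and $\F(\rho)$. Concretely, I would first pass from trace distance to fidelity: from $\|\E(\rho)-\F(\rho)\|_1 \le \varepsilon$ one gets $F\big(\E(\rho),\F(\rho)\big) \ge 1 - \varepsilon/2 \ge 1-\varepsilon$.

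Next comes the main step. Fix a purification $\ket{\psi}_{ABE}$ of $\rho$. Then $(\E\otimes\id_E)(\dm{\psi})$ is a (generally mixed) state on $ABEF$ whose reduction to $AB$ (tracing out $E$ and the flag $F$) is close to... actually, the cleanest route is to introduce an environment for each channel: let $\ket{\Phi^\E}_{ABEF R_\E}$ be a purification of $(\E\otimes\id_E)(\dm{\psi}_{ABE})$ and similarly $\ket{\Phi^\F}$ for $\F$. Since $\E$ and $\F$ have Stinespring dilations, and $\ket{\psi}_{ABE}$ is pure, $(\E\otimes\id_E)(\dm{\psi})$ has a purification obtained by dilating only the $AB$ part; in particular $\E(\rho)$ and $(\E\otimes\id_E)(\dm{\psi})$ have purifications that differ only by the location of the purifying system. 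By Uhlmann's theorem, $F\big((\E\otimes\id_E)(\dm\psi),\,(\F\otimes\id_E)(\dm\psi)\big) \ge F\big(\E(\rho),\F(\rho)\big)$, because any purification of the larger state is also a purification of the smaller (marginal) state, so the optimization defining Uhlmann fidelity for the $AB$-states ranges over a superset. Hence $F\big((\E\otimes\id_E)(\dm\psi),(\F\otimes\id_E)(\dm\psi)\big) \ge 1-\varepsilon$.

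Finally, convert back: Fuchs--van de Graaf gives $\|(\E\otimes\id_E)(\dm\psi) - (\F\otimes\id_E)(\dm\psi)\|_1 \le 2\sqrt{1 - F^2} \le 2\sqrt{2(1-F)} \le 2\sqrt{2\varepsilon} \le 4\sqrt{\varepsilon}$, which is exactly (\ref{eq:localclose}). The only mild subtlety is bookkeeping with the classical flag register $f$ that appears in the definition (\ref{eq:idealmap}) of $\F$: since the flag is determined by parameter estimation, which depends only on the $AB$ systems, it can be folded into the ``output system'' of both maps without affecting monotonicity of fidelity, so the argument goes through verbatim with $AB$ replaced by $AB\otimes f$.

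The step I expect to be the real content is the fidelity-monotonicity claim $F\big((\E\otimes\id_E)(\dm\psi),(\F\otimes\id_E)(\dm\psi)\big)\ge F(\E(\rho),\F(\rho))$: one must be careful that a purification of the joint output state on $ABE$ is indeed a valid purification of the marginal output state $\E(\rho)$ on $AB$ — this is immediate since tracing out $E$ from a purification of $(\E\otimes\id_E)(\dm\psi)$ yields $\E(\rho)$ — and conversely that $\ket\psi_{ABE}$ itself, pushed through the Stinespring dilation of $\E$ on $AB$, purifies $(\E\otimes\id_E)(\dm\psi)$. Granting this, Uhlmann's theorem forces the inequality, and everything else is routine use of the trace-distance/fidelity conversions.
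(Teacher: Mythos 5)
There is a genuine gap at the step you yourself flag as ``the real content'': the inequality $F\bigl((\E\otimes \id_E)(\dm{\psi}),(\F\otimes \id_E)(\dm{\psi})\bigr)\ge F\bigl(\E(\rho),\F(\rho)\bigr)$ is the wrong direction of monotonicity and is false in general. Uhlmann's theorem / data processing under partial trace give exactly the opposite: tracing out $E$ can only \emph{increase} fidelity, so $F\bigl(\E(\rho),\F(\rho)\bigr)\ge F\bigl((\E\otimes \id_E)(\dm{\psi}),(\F\otimes \id_E)(\dm{\psi})\bigr)$. Your ``superset of purifications'' argument shows precisely this reversed inequality, not the one you need. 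Moreover, no correct argument can prove the lemma for arbitrary CPTP maps $\E,\F$: take $\E=\id$, $\F$ the completely dephasing channel, $\rho$ the maximally mixed qubit and $\ket{\psi}_{AE}=\ket{B_{00}}$. Then $\|\E(\rho)-\F(\rho)\|_1=0$, yet $(\E\otimes \id_E)(\dm{\psi})=\dm{B_{00}}$ and $(\F\otimes \id_E)(\dm{\psi})=\tfrac12(\dm{00}+\dm{11})$ are at trace distance $1$. So closeness on the marginal does not lift to closeness on the purification without using structural properties of the maps, which your proposal never invokes.

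The property the paper exploits is the specific branch structure of the ideal map: $\E$ and $\F$ agree on the fail branch, and in the accepting branch the ideal output $\dm{\varphi}^{\otimes m}$ is \emph{pure}. From $\|\E(\rho)-\F(\rho)\|_1\le\varepsilon$ one gets $\|\sigma_{AB}-\dm{\varphi}^{\otimes m}\|_1\le\varepsilon/p_\rho$ for the real accepting-branch state $\sigma_{AB}$; then, because the reference state is pure, closeness of the $AB$-marginal to it forces the joint state $\sigma_{ABE}$ to decouple from Eve, $\|\sigma_{ABE}-\dm{\varphi}^{\otimes m}\otimes\rho_E\|_1\le 4\sqrt{\varepsilon/p_\rho}$ (this is Lemma 10 of the supplementary material of \cite{Pirker16a}, essentially a correct use of Uhlmann plus Fuchs--van de Graaf), and multiplying by $p_\rho$ gives $4\sqrt{p_\rho\varepsilon}\le 4\sqrt{\varepsilon}$. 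So the fidelity/trace-distance conversions you cite do appear, but only after the problem has been reduced, via purity of the target state and agreement on the fail branch, to a decoupling statement; that reduction is the missing idea in your proposal.
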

\begin{proof}
We observe that 
\begin{align}
\E(\rho) & = p_\rho \sigma_{AB} \otimes \ketbra{ok}{ok}{} & \notag \\
& + (1-p_\rho) \sigma^{\perp}_{AB} \otimes \ketbra{fail}{fail}{} & \\
\F(\rho) & = p_\rho \dm{\varphi}^{\otimes m}_{AB} \otimes \ketbra{ok}{ok}{} & \notag \\
& + (1-p_\rho) \sigma^{\perp}_{AB} \otimes \ketbra{fail}{fail}{} . &
\end{align}
The assumption $\|\E(\rho) - \F(\rho)\|_1 \leq \varepsilon$ implies $p_{\rho} \|\sigma_{AB} - \dm{\varphi}^{\otimes m}_{AB}\|_1 \leq \varepsilon$ because $\E(\rho)$ and $\F(\rho)$ are equal on the fail branch. Thus we have $\|\sigma_{AB} - \dm{\varphi}^{\otimes m}_{AB} \|_1 \leq \varepsilon / p_{\rho}$.

Furthermore we find for the application of the real and the ideal protocol to the purification $\ket{\psi}_{ABE}$ of $\rho_{AB}$ that
\begin{align}
(\E \otimes id_{E}) & (\dm{\psi}_{ABE}) = p_{\rho} \sigma_{ABE} \otimes \ketbra{ok}{ok}{} & \notag \\
& + (1-p_{\rho}) \sigma^{\perp}_{ABE} \otimes \ketbra{fail}{fail}{}, & \\
(\F \otimes id_{E}) & (\dm{\psi}_{ABE}) = p_{\rho} \dm{\varphi}^{\otimes m}_{AB} \otimes \rho_{E} \otimes \ketbra{ok}{ok}{} & \notag \\
& + (1-p_{\rho}) \sigma^{\perp}_{ABE} \otimes \ketbra{fail}{fail}{}. &
\end{align}
This implies for the $1$-norm that
\begin{align}
\|(\E \otimes id_{E} &- \F \otimes id_{E})(\dm{\psi}_{ABE}) \|_1 \notag \\
& = p_\rho \|\sigma_{ABE} -  \dm{\varphi}^{\otimes m}_{AB} \otimes \rho_{E} \|_1. \label{eq:global:1}
\end{align}
Thus we need to show that $p_\rho \|\sigma_{ABE} - \dm{\varphi}^{\otimes m}_{AB} \otimes \rho_{E}\|_1 \leq 4 \sqrt{\varepsilon}$.  One easily verifies $\ptr{E}{\sigma_{ABE}} = \sigma_{AB}$ and $\ptr{AB}{\sigma_{ABE}} = \rho_{E}$ because the system $E$ is not affected by the protocol $\E$. Recall that we have by assumption that $\|\sigma_{AB} - \dm{\varphi}^{\otimes m}_{AB} \|_1 \leq \varepsilon / p_{\rho}$. Thus we apply Lemma 10 of the supplementary material from \cite{Pirker16a} to $\rho_{SE} := \sigma_{ABE}$ and $\varphi_{SE} := \dm{\varphi}^{\otimes m}_{AB} \otimes \rho_{E}$ where $S := AB$ which implies
\begin{align}\label{equ:global:2}
\|\sigma_{ABE} - \dm{\varphi}^{\otimes m}_{AB} \otimes \rho_{E}\|_1 \leq 4 \sqrt{\varepsilon / p_{\rho}}.
\end{align}
Employing (\ref{equ:global:2}) in (\ref{eq:global:1}) yields
\begin{align}
\|(\E \otimes id_{E} & - \F \otimes id_{E})(\ketbra{\psi}{\psi}{ABE}) \|_1 \notag \\
& \leq p_\rho 4 \sqrt{\varepsilon / p_{\rho}} = 4 \sqrt{p_{\rho} \varepsilon} \leq 4 \sqrt{\varepsilon}
\end{align}
which completes the proof.
\end{proof}

\section{Proof of Theorem \ref{theo:postselect}}\label{app:bipartite:thm}

\begin{proof}
Due to the symmetrization we find that $\E^{s}$ and $\F^{s}$ are permutation invariant maps. Hence applying the post-selection technique of \cite{RennerPost} gives 
\begin{align}
\|\E^{s} & - \F^{s} \|_{\diamond} \notag \\
& \leq g_{n+kn,d} \|(\E^{s} \otimes id_{E} - \F^{s} \otimes id_{E})(\ketbra{\tau}{\tau}{ABE})\|_1 \label{equ:post:1}
\end{align}
where $d$ is determined by the number of participants (see discussion below) and $\ket{\tau}_{ABE}$ is a purification of the de-Finetti Hilbert-Schmidt state, hence $\ptr{E}{\ketbra{\tau}{\tau}{ABE}} = \int \sigma^{\otimes n+kn}_{AB} d\mu(\sigma) =: \tau'$ where $\mu$ is the measure induced by the Hilbert-Schmidt metric on $\text{End}(\mathbb{C}^d)$. One easily observes that 
\begin{align}
\|\E^{s}(\tau') & - \F^{s}(\tau')\|_1 = \left\|(\E^{s} - \F^{s})\left(\int \sigma^{\otimes n + kn}_{AB} d\mu(\sigma) \right) \right\|_1 \notag \\
& \leq  \max_{\sigma_{AB}} \left\|(\E-\F)\left(\sigma^{\otimes n + kn}_{AB} \right) \right\|_1 \label{eq:post:2}
\end{align}
where $\E$ and $\F$ denote the subprotocols after symmetrization. As $\ket{\tau}_{ABE}$ is a purification of $\tau'$ we can apply Lemma \ref{lem:localcloseness} implying for (\ref{equ:post:1}) that
\begin{align}
\|\E^{s} & - \F^{s} \|_{\diamond} \notag \\
& \leq g_{n+kn,d} \|(\E^{s} \otimes id_{E} - \F^{s} \otimes id_{E})(\ketbra{\tau}{\tau}{ABE})\|_1 \notag \\
& \leq g_{n+kn,d} 4 \sqrt{\|(\E^{s} - \F^{s})(\ptr{E}{\ketbra{\tau}{\tau}{ABE}})\|_1 } \notag \\
& = 4g_{n+kn,d} \sqrt{\|(\E^{s} - \F^{s})(\tau')\|_1 } \notag \\
& \leq 4 g_{n + kn,d} \sqrt{\max_{\sigma_{AB}} \left\|(\E-\F)\left(\sigma^{\otimes n + kn}_{AB} \right) \right\|_1}
\end{align}
where the second inequality stems from Lemma \ref{lem:localcloseness} and the last inequality from (\ref{eq:post:2}) which finally shows the claim. 
\end{proof}

\section{Confidentiality of a noisy measurement-based implementation of the hashing protocol}\label{app:noiseproof}

Within this section we prove Eq. (\ref{eq:conf:noise:all}) of the main text. In doing so, we formulate the following Theorem.
\begin{theo}
Let $\E^{s,\alpha}$ and $\F^{s,\alpha}$ be the real and the ideal noisy hashing protocol prepended by symmetrization where noise of strength $1-\alpha$ of the form (\ref{eq:ldn:app}) acts on each qubit of the resource state independent and identical. Then
\begin{align}
\|\E^{s,\alpha} - \F^{s,\alpha} \|_{\diamond} \leq \|\E^{s} - \F^{s}\|_{\diamond}. 
\end{align}
\end{theo}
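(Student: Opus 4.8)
The plan is to reduce the noisy statement to the noiseless one that has already been established, by writing both $\E^{s,\alpha}$ and $\F^{s,\alpha}$ as the noiseless maps $\E^s$, $\F^s$ pre- and post-composed with two fixed, input-independent CPTP channels, and then appealing to contractivity of the trace norm under CPTP maps. First I would split the local depolarizing noise $\mathcal{D}(\alpha)$ on the $n+kn$ resource states into the part acting on the \emph{input} qubits (the ones consumed by the read-in Bell measurements) and the part acting on the \emph{output} qubits (the ones carrying the distilled pairs), and treat the two pieces separately.

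For the input part I would invoke the symmetry observation recalled in the main text and proved in \cite{Zw13}: for a projector $\mathcal{P}$ onto a Bell state one has $\mathcal{P}\,\mathcal{D}_1(\alpha)=\mathcal{P}\,\mathcal{D}_2(\alpha)$, a consequence of $(\id\otimes\sigma)\ket{B_{ij}}=(\sigma\otimes\id)\ket{B_{ij}}$ up to a global phase for every Pauli $\sigma$ together with the fact that $\mathcal{D}_j(\alpha)$ is a Pauli channel. Since the read-in couples each initial-state qubit to a resource input qubit by exactly such a Bell measurement, the depolarizing noise on a resource input qubit can be moved, qubit by qubit, onto the corresponding initial-state qubit. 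This produces a CPTP channel $\mathcal{M}^\alpha$ acting only on the $AB$ systems of the initial states --- hence commuting with $\id_E$, and, being i.i.d., commuting with the symmetrization permutation --- such that the noisy protocol run on $\rho$ coincides with the noiseless protocol run on $\mathcal{M}^\alpha(\rho)$, the parameter-estimation window being shifted to absorb the increased entropy (this is precisely the ``modification of parameter estimation'' of the main text). Thus $\E^{s,\alpha}=\mathcal{N}^\alpha\circ\E^{s}\circ\mathcal{M}^\alpha$, and since the ideal map is obtained by running the real protocol internally and only substituting the asymptotic state on the $ok$ branch, the same manipulations give $\F^{s,\alpha}=\mathcal{N}^\alpha\circ\F^{s}\circ\mathcal{M}^\alpha$. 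Here $\mathcal{N}^\alpha$ is the depolarizing channel on the output qubits; it commutes with every read-in Bell measurement (those qubits are untouched at read-in), so it may be pulled to the very end, and by the definition $\F^\alpha=\mathcal{N}^\alpha\circ\F$ it is applied to the ideal $ok$-branch state $\dm{\varphi}^{\otimes m}$ as well, consistently with the real branch.

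With this factorization the conclusion is immediate. By linearity,
\begin{align}
\E^{s,\alpha}-\F^{s,\alpha}=\mathcal{N}^\alpha\circ\left(\E^{s}-\F^{s}\right)\circ\mathcal{M}^\alpha,
\end{align}
so, tensoring with $\id_k$, using that the trace norm is non-increasing under the CPTP map $\mathcal{N}^\alpha\otimes\id_k$ and that $\mathcal{M}^\alpha\otimes\id_k$ maps the unit trace-norm ball into itself, and taking the supremum in the definition of $\|\cdot\|_\diamond$, one obtains $\|\E^{s,\alpha}-\F^{s,\alpha}\|_\diamond\le\|\E^{s}-\F^{s}\|_\diamond$. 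Equivalently, one may quote sub-multiplicativity of the diamond norm together with $\|\mathcal{N}^\alpha\|_\diamond=\|\mathcal{M}^\alpha\|_\diamond=1$.

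The main obstacle is the second paragraph: one must check that the depolarizing noise on the resource input qubits is \emph{exactly}, not approximately, equivalent to additional depolarizing noise on the initial states --- uniformly over all inputs, including those in which Eve holds a purifying system entangled with the initial pairs --- and that the resulting channel $\mathcal{M}^\alpha$ really is input-independent, so that it can be absorbed once and for all into a redefinition of the parameter-estimation thresholds shared between the same $\E^s$ and $\F^s$ as in the noiseless analysis. Once that is in place, pulling the output noise through to the end and applying trace-norm contraction are routine.
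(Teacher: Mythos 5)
Your proposal is correct and follows essentially the same route as the paper: move the input-qubit depolarizing noise onto the initial states via the Bell-state symmetry of \cite{Zw13} (absorbing it into a shifted parameter-estimation window), pull the output-qubit noise $\NN^\alpha$ past the read-in measurements to the very end, and conclude by contractivity of the trace norm under CPTP maps; your packaging as an explicit factorization $\E^{s,\alpha}-\F^{s,\alpha}=\NN^\alpha\circ(\E^{s}-\F^{s})\circ\mathcal{M}^\alpha$ with unit-diamond-norm channels is just a slightly more algebraic phrasing of the paper's two-step argument. The only detail you gloss over, which the paper dispatches in one sentence, is that the Pauli byproduct operators from the read-in outcomes commute with local depolarizing noise, so pulling $\NN^\alpha$ to the end is legitimate.
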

\begin{proof}
The resource state necessary for the measurement-based implementation of hashing is pure and minimal in the number of qubits and consists only of input and output qubits, because all quantum gates involved in the hashing protocol are elements of the Clifford group \cite{RausMeasQCcluster}. \newline
Hence there are only two different locations at which noise acts: input and output qubits. For the noise acting on the input qubits we use the observation made in \cite{Zw13}, which enables us to {\it virtually} move the noise from the input qubits to the initial states, thereby increasing their entropy. For the noise acting on the output qubits, as described in the main text, we can safely assume that this noise will act after the protocol completes, leaving us with a noiseless hashing protocol (w.r.t. the output qubits). \newline
We deal with the noise on the input qubits by a slight modification of the parameter estimation step. Recall that Alice and Bob fix $F_{\mathrm{min}}$ and $F_{\mathrm{max}}$ for parameter estimation and they continue with the hashing protocol whenever their fidelity estimate $\overline{F}$ is within the interval $[F_{-},F_{+}]$ where $F_{\pm} = F_{\mathrm{PE}} \pm \Delta / 4$ for $F_{\mathrm{PE}} = (F_{\mathrm{max}} + F_{\mathrm{min}})/2$ and $\Delta = F_{\mathrm{max}} - F_{\mathrm{min}}$. The noise acting on the input qubits of the resource states increases the entropy of the initial states which forces Alice and Bob to accept less initial states from Eve. By describing the initial states in an i.i.d. setting after the twirl via i.i.d. LDN of the form (\ref{eq:ldn:app}), i.e. $\rho = D_1(q) \dm{B_{00}}$, the parameter estimation interval $[F_{-},F_{+}]$ transforms to  $[q_{-},q_{+}]$ via $q_{\pm} = (4F_{\pm} - 1)/3$. According to the previous observation that we can virtually move the noise of level $\alpha$ on the input qubits of the resource states, $D_1(\alpha)$ and $D_2(\alpha)$ respectively, to the initial states we consequently describe the initial states as $D_2(\alpha)D_1(\alpha)D_1(q) \dm{B_{00}} = D_1(\alpha^2) D_1(q) \dm{B_{00}} = D_1(\alpha^2 q) \dm{B_{00}}$, see also Fig. \ref{fig:app:noise:input}. Observe that we have moved the noise from Bob's to Alice's side due to the symmetry of Bell-states. 
\begin{figure}[h!]
\centering
\includegraphics[scale=1]{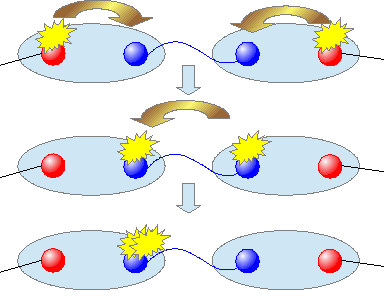}  
\caption[h!]{\label{fig:app:noise:input} The figure shows, at an abstract level, how noise on the input qubits is moved from the resource states to the initial states. The blue ellipsis indicate Bell-measurements, the red vertex the input qubits of the resource state and the light-blue vertex the qubits of the initial states.}
\end{figure}
Thus we need to have $\alpha^2 q \in [q_{-},q_{+}]$ to pass the parameter estimation and run the hashing protocol. Observe that $\alpha^2 q$ transforms to the fidelity $F'$ of the initial states, including the noise of the resource state, via $\alpha^2 q = (4F'-1)/3$. Therefore we modify the parameter estimation to continue with the hashing protocol whenever the estimate of the fidelity $\overline{F}$ of the initial states satisfies 
\begin{align}
\overline{F} \in \left[\frac{3 q_{-} + \alpha^2}{4 \alpha^2},\frac{3 q_{+} + \alpha^2}{4 \alpha^2} \right], \label{eq:pemodified}
\end{align}
see Fig. \ref{fig:app:noise:pe}. \newline
\begin{figure}[h!]
\centering
\includegraphics[width=\columnwidth]{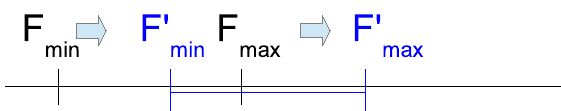}  
\caption[h!]{\label{fig:app:noise:pe} The interval for parameter estimation acceptance $[F_{\mathrm{min}},F_{\mathrm{max}}]$ transforms according to (\ref{eq:pemodified}) to $[F'_{\mathrm{min}},F'_{\mathrm{max}}]$.}
\end{figure}
We denote the protocols with modified parameter estimation according to condition (\ref{eq:pemodified}) by the maps $\E^{s,\alpha-\mathrm{in}}$ and $\F^{s,\alpha-\mathrm{in}}$ respectively. It follows immediately from the definition of the protocols that we achieve the same confidentiality level of Eq. (\ref{eq:conf:arb}) of the main text as for the noiseless protocols, Alice and Bob will just abort the protocol more often. Hence we easily deduce 
\begin{align}
\|\E^{s,\alpha-\mathrm{in}} - \F^{s,\alpha-\mathrm{in}}\|_{\diamond} = \|\E^{s} - \F^{s}\|_{\diamond}. \label{eq:conf:noise:in}
\end{align}
We now extend the confidentiality proof to a full noisy measurement-based implementation of the hashing protocol protocol as follows: Since we can effectively move noise of level $\alpha$ acting on the input qubits of the resource states to the to-be-purified ensemble, the modification (\ref{eq:pemodified}) of the parameter estimation extends the confidentiality proof via (\ref{eq:conf:noise:in}) to noise acting on the input qubits of the resource state. For noise acting on the output qubits we use the following observation: Because the noise is assumed to be of the form (\ref{eq:ldn:app}) it is also CPTP. By denoting the noise acting on the output qubits as $\NN^{\alpha} = \bigotimes^{m}_{j=1} \mathcal{D}_{j;A}(\alpha) \mathcal{D}_{j;B}(\alpha)$ where $A$ and $B$ denote Alice's and Bob's parts of the final Bell-pairs, the noisy real protocol and ideal protocol read as $\E^{s,\alpha} = \NN^{\alpha} \circ \E^{s,\alpha-\mathrm{in}}$ and $\F^{s,\alpha} = \NN^{\alpha} \circ \F^{s,\alpha-\mathrm{in}}$ respectively \footnote{This is due to the fact that we can assume that the noise acting on the output qubits is applied after the protocol(s) have finished.}. Hence (\ref{eq:conf:noise:in}) and the contractivity of the $1-$norm for CPTP maps imply  
\begin{align}
\|\E^{s,\alpha} - \F^{s,\alpha} \|_{\diamond} & = \| \NN^{\alpha} \circ (\E^{s,\alpha-\mathrm{in}}-\F^{s,\beta-\mathrm{in}}) \|_{\diamond} \notag \\
& \leq  \| \E^{s,\alpha-\mathrm{in}} - \F^{s,\alpha-\mathrm{in}} \|_{\diamond} = \|\E^{s} - \F^{s}\|_{\diamond}.
\end{align}
What remains to be dealt with are the Pauli byproduct operators due to the measurement outcomes at the inputs, but since LDN of the form (\ref{eq:ldn:app}) commutes with the Pauli byproduct operators we do not have to worry about them in the proof of confidentiality, which completes the proof.
\end{proof}

\section{Confidentiality of multiparty hashing protocol for two-colorable graph states}\label{app:multi}

We start by recalling some basic notation, definitions and properties of graph states. \newline
We define the graph state basis $\ket{\psi_{\kappa_1,\ldots,\kappa_{N}}}$ where $\kappa_1,\ldots,\kappa_{N} \in \lbrace 0,1 \rbrace$ associated with a graph $G=(V,E)$ where $N = |V|$ as the common eigenstate of the correlation operators 
\begin{align}
K_{j} = X^{(j)} \prod_{\lbrace j,k \rbrace \in E} Z^{(k)} \label{eq:graph:correlation}
\end{align}
with eigenvalues $(-1)^{\kappa_j}$ for $1 \leq j \leq N$ where the superscript denote the qubit on which the Pauli operator is acting on. We refer to the state $\ket{\psi_{0,\ldots,0}}$ also as the graph state associated with $G=(V,E)$. Note that the states $\lbrace \ket{\psi_{\kappa_1,\ldots,\kappa_{N}}} \rbrace^1_{\kappa_1,\ldots,\kappa_{N} = 0}$ form a basis of the Hilbert-space $(\mathbb{C}^{2})^{\otimes N}$. A special class of graph states are so-called two-colorable graph states which correspond to two-colorable graphs. A graph is said to be two-colorable if there exists a mapping $f \colon V \to \lbrace 1, 2 \rbrace$ such that for all vertices $v \in V$ it holds that $f(v) \neq f(w)$ for all neighbors $w \in V$ of $v$. The most prominent example of  two-colorable graph states are GHZ and cluster states \cite{BriegelCluster}. \newline
Suppose we want to distill a two-colorable graph state $\ket{\psi_{0 \ldots 0}}$ corresponding to a graph $G=(V,E)$ where $V = V_A \cup V_B$, $A$ and $B$ denote the colors and $|V_A| = N_A$, $|V_B| = N_B$ where $N = N_A + N_B$. The multipartite hashing protocol assumes asymptotically many i.i.d. initial states $\rho$ diagonal in the graph state basis, i.e. $\rho = \sum_{\bm{\mu}, \bm{\nu}} \lambda_{\bm{\mu}, \bm{\nu}} \dm{\psi_{\bm{\mu}, \bm{\nu}}}$ where $\bm{\mu} = (\mu_1,\ldots,\mu_{N_A}) \in \lbrace 0,1 \rbrace^{N_A}$ and $\bm{\nu} = (\nu_1,\ldots,\nu_{N_B}) \in \lbrace 0,1 \rbrace^{N_B}$ are multi-indices corresponding to color $A$ and $B$ respectively \footnote{If the initial states are not diagonal in the graph state basis we achieve this by probabilistically applying the correlation operators (\ref{eq:graph:correlation}), see \cite{Du05}. This procedure is also referred to as twirling.}. \newline
For two-colorable graph states we define multilateral CNOTs on two copies $\rho_1$ and $\rho_2$ which enable us to transfer information between the initial states $\rho_1$ and $\rho_2$. More precisely, by applying a CNOT to all particles in $V_A\,(V_B)$ where $\rho_1$ serves as target(source) and $\rho_2$ as source (target) a straightforward computation leads to (by denoting this unitary as $U_1$)
\begin{align}
\ket{\psi_{\bm{\mu}, \bm{\nu}}} \otimes \ket{\psi_{\bm{\mu'}, \bm{\nu'}}} \stackrel{U_{1}}{\to} \ket{\psi_{\bm{\mu}, \bm{\nu} \oplus \bm{\nu'}}} \otimes \ket{\psi_{\bm{\mu} \oplus \bm{\mu'}, \bm{\nu'}}}. \label{eq:mcnot:1}
\end{align}
By exchanging the roles of $V_A$ and $V_B$ one obtains (by denoting this unitary as $U_2$)
\begin{align}
\ket{\psi_{\bm{\mu}, \bm{\nu}}} \otimes \ket{\psi_{\bm{\mu'}, \bm{\nu'}}} \stackrel{U_{2}}{\to} \ket{\psi_{\bm{\mu} \oplus \bm{\mu'}, \bm{\nu}}} \otimes \ket{\psi_{\bm{\mu'}, \bm{\nu} \oplus \bm{\nu'}}} \label{eq:mcnot:2}.
\end{align}
Suppose we measure all qubits of the graph state $|\psi_{\mu_1,\ldots,\mu_{N_A},\nu_1,\ldots,\nu_{N_B}} \rangle$ belonging to the set $V_A$ with the $X$ and all qubits of the set $V_B$ with the $Z$ observable. By denoting the outcomes of the $X$ measurements with $\xi_i \in \lbrace 0,1 \rbrace$ and the outcomes of the $Z$ measurements with $\zeta_j \in \lbrace 0,1 \rbrace$ one immediately finds via (\ref{eq:graph:correlation}) 
\begin{align}
\mu_i = \left( \xi_i + \sum_{\lbrace i,j \rbrace \in E} \zeta_j \right) \mod 2
\end{align} 
for all $1 \leq i \leq N_A$. In other words, we can use this measurement setting to reveal information about all $\kappa_i$ for $1 \leq i \leq N_A$ simultaneously. We refer to this measurements with $M_{1}$. Similarly, by exchanging the roles of $V_A$ and $V_B$ we obtain information about all $\nu_i$ for $1 \leq i \leq N_b$. In the following, we refer to this measurements with $M_{2}$. \newline
The multiparty hashing protocol is now defined as follows \cite{Du05}: In order to reveal information about color $A$, i.e. $\bm{\mu}$, (which we denote as sub-protocol $P_1$) we apply $U_1$ to a random subset of the $n$ initial states with common target system (thereby accumulating the values corresponding to color $A$) and perform measurement $M_1$ on this common system. Similarly, by applying $U_2$ to a random subset of the initial states with a common target system (thereby accumulating the values corresponding to color $B$) followed by $M_2$ on this common system one obtains information about color $B$, i.e. $\bm{\nu}$ (which we denote as sub-protocol $P_2$). Repeating the sub-protocols $P_1$ and $P_2$ sufficiently many times leads to perfect knowledge about the remaining states, i.e. one ends up in a pure state (which we restore to the target state $|\psi_{0,\ldots,0} \rangle^{\otimes m}$). \newline
Recall that the overall protocol prepends the multiparty hashing protocol by a twirling and parameter estimation step. The twirling step ensures that the initial states are diagonal within the graph state basis, see \cite{Du05}, whereas the participants use parameter estimation to decide whether the multiparty hashing protocol will succeed or not. \newline
Formally, we define the probabilities 
\begin{align}
a^{(\mu_i)}_{i} &= \sum_{\mu_k \neq \mu_j, \nu} \lambda_{\mu_1, \ldots, \mu_i, \ldots, \mu_{N_A}, \nu} \label{eq:multi:iid:a} \\
b^{(\nu_j)}_{j} &= \sum_{\nu_k \neq \nu_j, \mu} \lambda_{\mu,\nu_1, \ldots, \nu_j, \ldots, \nu_{N_B}} \label{eq:multi:iid:b}
\end{align}
for $1 \leq i \leq N_A$ and $1 \leq j \leq N_B$. For example, for a three-qubit state we have $a^{(0)}_{1} = \sum_{k,l} \lambda_{0kl}$ and $a^{(1)}_{1} = \sum_{k,l} \lambda_{1kl}$. Observe that the values $S(a_{i})$ and $S(b_{j})$ correspond to the entropies of $\mu_i$ and $\nu_j$ within the vectors $\bm{\mu}$ and $\bm{\nu}$. \newline 
As shown in \cite{Du05}, the protocol described above is in the asymptotic limit capable of distilling $m=n(1-\max_{1 \leq i \leq N_A} S(a_{i}) - \max_{1 \leq j \leq N_B} S(b_{j}))$ copies of the state $|\psi_{0,\ldots,0} \rangle$.  \newline
Now we are ready to compute the distance of the real and ideal multiparty hashing protocol for i.i.d. initial states. Intuitively it follows from the same arguments as in the bipartite setting. 
\begin{theo}
Let $\E$ be the real and $\F$ be the ideal multiparty hashing protocol. Furthermore let $\rho$ be an initial state. Then
\begin{align}
\|\E (\rho^{\otimes n + kn}) & - \F (\rho^{\otimes n + kn}) \|_1 \leq \varepsilon_{\text{H}} \label{eq:conf:multi:local} 
\end{align}
where $\varepsilon_{\text{H}} \in O(\exp(-\sqrt{n}))$ is independent of the initial state $\rho$. 
\end{theo}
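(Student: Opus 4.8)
The plan is to follow the argument used for the bipartite i.i.d.\ convergence theorem above, carrying along the extra bookkeeping forced by the two colours. First I would observe that $\E$ and $\F$ coincide on the $fail$-branch, so that $\|\E(\rho^{\otimes n+kn}) - \F(\rho^{\otimes n+kn})\|_1 = p_\rho \|\sigma - \dm{\psi_{0,\ldots,0}}^{\otimes m}\|_1$, where $\sigma$ is the state reached by multiparty hashing after all rounds of the sub-protocols $P_1$ and $P_2$, conditioned on acceptance, and $p_\rho$ is the overall success probability (parameter estimation included). Exactly as in the bipartite case this is at most $2(p_1 + p_2 + p_3)$, where $p_1$ bounds the probability that the i.i.d.\ string of graph-state-basis labels $(\bm\mu,\bm\nu)$ falls outside the likely subspace, $p_2$ the probability that after the last round more than one configuration is compatible with the collected parity information, and $p_3$ the probability that parameter estimation accepts although the true parameters lie outside the admissible window. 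Since the prepended twirl makes $\rho$ diagonal in the graph-state basis (cf.\ \cite{Du05}) without touching Eve's system, all three are probabilities of classical events determined by the fixed distribution $\{\lambda_{\bm\mu,\bm\nu}\}$; the number of output systems will be $m = n(1 - \max_i S(a_i) - \max_j S(b_j) - 2\delta)$ with, as before, $\delta = n^{-1/5}$.

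For $p_1$ I would apply Bennett's inequality (\ref{eq:bennett}) separately to each colour coordinate: for colour $A$ and index $i$ to the random variable $-\log_2 a^{(\mu_i)}_i - S(a_i)$ with $a^{(\mu_i)}_i$ from (\ref{eq:multi:iid:a}), and analogously for colour $B$ using $b^{(\nu_j)}_j$ from (\ref{eq:multi:iid:b}); these variables are bounded because parameter estimation restricts the admissible $a_i,b_j$ to a compact interval. A union bound over the $N = N_A + N_B$ coordinates then yields $p_1 \le 2N\exp(-c_1 n^{3/5})$ for a constant $c_1>0$ depending only on the parameter-estimation window; since $N$ is a fixed constant of the chosen protocol this prefactor is harmless. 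For $p_2$, the counting argument of \cite{Be96} bounding the number of surviving compatible configurations, carried out once for $P_1$ and once for $P_2$, gives $p_2 < 2^{-n\delta} = 2^{-n^{4/5}}$ up to a fixed combinatorial factor. For $p_3$ I would use Hoeffding's inequality (\ref{eq:hoeffding}) for the empirical estimators of $a^{(\mu_i)}_i$ and $b^{(\nu_j)}_j$ built from the $M_1$ and $M_2$ measurements performed on the $kn$ sacrificed copies, again with a union bound over the $N$ coordinates, which gives $p_3 \le 2N\exp(-c_2 kn)$ for a constant $c_2>0$.

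Collecting the three bounds and inserting $\delta = n^{-1/5}$ gives a bound on $\|\E(\rho^{\otimes n+kn}) - \F(\rho^{\otimes n+kn})\|_1$ of the form $2\,[\,2N\exp(-c_1 n^{3/5}) + 2^{-n^{4/5}} + 2N\exp(-c_2 kn)\,] =: \varepsilon_{\text{H}}$, manifestly independent of $\rho$; since the slowest-decaying term behaves like $\exp(-c_1 n^{3/5})$ and $n^{3/5}$ eventually exceeds $n^{1/2}$, we have $\varepsilon_{\text{H}} \in O(\exp(-\sqrt n))$, as claimed. I expect the main obstacle to be not any individual estimate but the correct identification, in the multipartite setting, of \emph{the string} and of \emph{the likely subspace}: because $M_1$ (resp.\ $M_2$) reveals information about all colour-$A$ coordinates $\mu_i$ (resp.\ colour-$B$ coordinates $\nu_j$) simultaneously, one has to argue that running $P_1$ to completion and then $P_2$ to completion amounts to solving, coordinate by coordinate, the classical hashing problem of \cite{Be96} for the product distribution over $(\bm\mu,\bm\nu)$, so that the likely-subspace and universal-hashing estimates apply with only the union-bound prefactor above. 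The remaining technical points --- tracking the Pauli byproduct operators produced by the $X/Z$ read-out measurements, and checking that the twirl leaves Eve's marginal unchanged --- are routine consequences of (\ref{eq:graph:correlation})--(\ref{eq:mcnot:2}) and the graph-state symmetries.
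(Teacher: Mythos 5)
Your proposal is correct and follows essentially the same route as the paper: the same three-way failure decomposition ($p_1$, $p_2$, $p_3$ after noting $\E$ and $\F$ agree on the fail branch), per-coordinate concentration bounds with a union bound over the $N=N_A+N_B$ coordinates (justified, as you flag, by the fact that $U_1$ and $U_2$ act component-wise on $\bm{\mu}$ and $\bm{\nu}$, so the coordinates remain independent and each reduces to the hashing analysis of the bipartite case), and a Hoeffding estimate for parameter estimation, giving $\varepsilon_{\text{H}}$ independent of $\rho$ and in $O(\exp(-\sqrt{n}))$. The only substantive deviations are minor: the paper uses Hoeffding (with $\delta=n^{-1/4}$, landing exactly on the $\exp(-c\sqrt{n})$ rate) rather than Bennett for the likely-subspace term, and it secures boundedness of the variables $-\log_2 a^{(b)}_{i}-S(a_i)$ by mixing each initial state with a small admixture of the identity so that no $\lambda_{\bm{\mu},\bm{\nu}}$ vanishes, using parameter estimation only to make the resulting constant state-independent, whereas you attribute boundedness to the parameter-estimation window alone.
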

\begin{proof}
Recall that the multiparty hashing protocol aims to distill several copies of a two-colorable graph state via the sub-protocols $P_1$ for color $A$ and $P_2$ for color $B$ from $n$ copies of the initial state $\rho = \sum_{\bm{\mu},\bm{\nu}} \dm{\psi_{\bm{\mu}, \bm{\nu}}}$ where the states $\ket{\psi_{\bm{\mu}, \bm{\nu}}}$ correspond to the graph state basis. \newline
The crucial observation is that we learn the values of $\bm{\mu}$ and $\bm{\nu}$ corresponding to the colors $A$ and $B$ within $n$ copies of the initial state $\rho = \sum_{\bm{\mu},\bm{\nu}} \dm{\psi_{\bm{\mu}, \bm{\nu}}}$ via the sub-protocols $P_1$ and $P_2$ independently. In other words, $\bm{\mu}$ and $\bm{\nu}$ do not get correlated during the protocol execution, i.e. they remain independent. By taking a closer look at $P_1$ $(P_2)$ we infer that also the individual components of $\bm{\mu}$ $(\bm{\nu})$ remain independent. In particular, the components of $\bm{\mu} = (\mu_1,\ldots,\mu_{N_A})$ $(\bm{\nu} = (\nu_1,\ldots,\nu_{N_B}))$ remain distinct during the protocol, i.e. for each $i$ the value $\mu_i$ is independent of $\mu_k$ for all $k \neq i$ (for each $j$ the value $\nu_j$ is independent of $\nu_k$ for all $k \neq j$). This is due to the fact that $U_1$ $(U_2)$ operates component-wise on $\bm{\mu}$ $(\bm{\nu})$ \footnote{Intuitively speaking this independence stems from the two-colorability of the graph-state and the properties of $U_1$ and $U_2$.}. \newline
Keeping this observations in mind, it is straightforward to provide finite size estimates for the fidelity of the state after the protocol relative to $\ket{\psi_{0,\ldots,0}}$. Observe that the hashing protocol fails if either $P_1$ or $P_2$ fails which implies for the failure probability $p_f$ of the hashing protocol $p_f \leq p_{P_1} + p_{P_2}$ where $p_{P_1}$ and $p_{P_2}$ denote the failure probabilities of sub-protocol $P_1$ and $P_2$ respectively. \newline
First we discuss the failure probability of sub-protocol $P_1$. This sub-protocol can fail due to three reasons, similar as in the bipartite setting: the initial states do not belong to the likely subspace or, after the sub-protocol has finished, two or more configurations are compatible with the collected parity information, or the protocol is continued mistakenly after parameter estimation, i.e. the parties should have aborted but continued the multiparty hashing protocol to its very end. \newline
To provide an estimate for the probability that the initial states fall outside the likely subspace w.r.t. sub-protocol $P_1$ we define for color $A$ the random variables $X^{(i)}(b)$ for $1 \leq i \leq N_A$ which take the values 
\begin{align}
X^{(i)}(b) = -\log_2 a^{(b)}_{i} - S(a_{i}) \label{eq:multi:iid:rv}
\end{align}
with probability $a^{(b)}_{i}$. In order to learn $\bm{\mu}$, we observe that a specific $\bm{\mu}=(\mu_1,\ldots,\mu_{N_A})$ belongs to the likely subspace $\mathcal{L}$ whenever each $\mu_i$ belongs to its likely subspace $\mathcal{L}_i$, i.e. 
\begin{align}
\bm{\mu} \in \mathcal{L} \, \Leftrightarrow \, \forall 1 \leq i \leq N_A \colon \, \mu_i \in \mathcal{L}_i .
\end{align}
Consequently 
\begin{align}
\P \left(\bm{\mu} \notin \mathcal{L} \right) \leq \sum^{N_A}_{i=1} \P \left(\mu_i \notin \mathcal{L}_i \right) \leq N_A \max_{1 \leq i \leq N_A} \P \left(\mu_i \notin \mathcal{L}_i \right). \label{eq:multi:outside:0}
\end{align}
We estimate $\P \left(\mu_i \notin \mathcal{L}_i \right)$ via Hoeffding's inequality \cite{Hoeffding}. In order to apply Hoeffding's inequality we need to make sure that $\lambda_{\bm{\mu},\bm{\nu}} \neq 0$ for all $\bm{\mu}$ and $\bm{\nu}$ after twirling, as the the random variables $X^{(i)}(b)$ of (\ref{eq:multi:iid:rv}) need to be bounded. We achieve this by mixing each individual initial state with a small, but defined, portion of the identity operator. From this we observe that the random variables $X^{(i)}$ have zero mean and that $|X^{(i)}| \leq \max_{b \in {0,1}} |\log_2 a^{(b)}_{i}| + S(a_{i}) =: C_i$ after mixing. Therefore the Hoeffding inequality implies 
\begin{align}
\P \left(\left| \sum^n_{k=1} X^{(i)}_k \right| > t \right) \leq 2 \exp\left( \frac{-2t^2}{n C^2_i}\right) \label{eq:multi:outside:1}
\end{align}
for all $t$ where $k$ denotes the index of the initial state within $\rho^{\otimes n}$ and $i$ the $i-$th component of $\bm{\mu}$. Inserting $t = n \delta$ in (\ref{eq:multi:outside:1}) together with $\delta = n^{-1/4}$ yields
\begin{align}
\P \left(\mu_i \notin \mathcal{L}_i \right) = \P \left(\left| \sum^n_{k=1} X^{(i)}_k \right| > n^{3/4} \right) & \leq 2 \exp\left( \frac{-2\sqrt{n}}{C^2_i}\right) \notag \\
& \leq 2 \exp\left( \frac{-2\sqrt{n}}{C^2}\right) \label{eq:multi:outside}
\end{align}
where $C = \max_{1 \leq i \leq N_A} C_i$. Note that (\ref{eq:multi:outside}) is independent of $i$, which implies for (\ref{eq:multi:outside:0}) that
\begin{align}
\P \left(\bm{\mu} \notin \mathcal{L} \right) \leq 2 N_A \exp\left( \frac{-2\sqrt{n}}{C^2}\right).
\end{align}
Observe that $C = \max_{1 \leq i \leq N_A} C_i$ still depends on the initial states. Due to parameter estimation one finds another constant $C' > C$ independent of the initial states. \newline
The probability of not being able to distinguish between two or more configurations is, for a particular component of $\bm{\mu}$, again $2^{-n \delta}$, as for the bipartite case. Hence inserting $\delta = n^{-1/4}$ gives that the probability of misidentifying a specific $\mu_i$ where $1 \leq i \leq N_A$ is bounded by $2^{-n^{3/4}}$. Therefore the probability of misidentifying $\bm{\mu}$ is bounded by $N_A 2^{-n^{3/4}}$. \newline
We point out that also in the multipartite setting a parameter estimation step is crucial in order to ensure distillation. For that purpose we find that the states after twirling and mixing are diagonal within the graph state basis, i.e. of the form 
\begin{align}
\rho = \sum\limits_{\bm{\mu},\bm{\nu}} \lambda_{\bm{\mu},\bm{\nu}} \dm{\psi_{\bm{\mu},\bm{\nu}}}
\end{align}
where all $\lambda_{\bm{\mu},\bm{\nu}} \neq 0$. The goal of parameter estimation is to provide estimates $\overline{a_i}$ and $\overline{b_j}$ for the probability distributions $a_i$ and $b_j$ of (\ref{eq:multi:iid:a}) and (\ref{eq:multi:iid:b}) for all $1 \leq i \leq N_A$ and $1 \leq j \leq N_B$. The concrete boundaries for which the participants continue with hashing depends on the target state of the protocol. However, it suffices to estimate $\lambda_{\bm{\mu},\bm{\nu}}$ for all $\bm{\mu}$ and $\bm{\nu}$ which we denote by $\overline{\lambda_{\bm{\mu},\bm{\nu}}}$. Observe that we have to determine in total $2^N$ coefficients, where $N$ denotes the number of participants and is constant. This can be done via  measurements on $kn$ systems of $\rho$ according to the observables of the correlation operators (\ref{eq:graph:correlation}). Indeed, the expected values of the correlation operators are sufficient to determine the coefficients $\lambda_{\bm{\mu},\bm{\nu}}$ for all $\bm{\mu}$ and $\bm{\nu}$ within $\rho = \sum_{\bm{\mu},\bm{\nu}} \lambda_{\bm{\mu},\bm{\nu}} \dm{\psi_{\bm{\mu},\bm{\nu}}}$. Now one can apply Hoeffding's inequality to exponentially bound the probabilities that the estimates $\overline{\lambda_{\bm{\mu},\bm{\nu}}}$ of $\lambda_{\bm{\mu},\bm{\nu}}$ have a distance larger than some fixed $\eta > 0$ (which corrsponds to the accuracy of our estimate $\overline{\lambda_{\bm{\mu},\bm{\nu}}}$) similar to the bipartite case. From this we deduce that the probability of continuing with the hashing protocol mistakenly is exponentially small in terms of the number $n$ of initial states. \newline
In summary, via the same argument as in the bipartite case (i.e. the previous estimates are upper bounds for the real failure probabilities, see (\ref{eq:conf:local:bi:finalstate}), (\ref{eq:conf:local:ok:1}) and (\ref{eq:conf:local:ok})), the probability that sub-protocol $P_1$ fails satisfies $p_{P_1} \in O(\exp(-\sqrt{n}))$. Similarly one obtains that sub-protocol $P_2$ fails with probability $p_{P_2} \in O(\exp(-\sqrt{n}))$ which implies that $p_f \in O(\exp(-\sqrt{n}))$, thereby proving $\varepsilon_{\mathrm{H}} \in O(\exp(-\sqrt{n}))$ as claimed.
\end{proof}
Observe that Eq. (\ref{eq:conf:multi:local}) is restricted to i.i.d. initial states rather than arbitrary initial states and does not take into account Eve's purification of the initial states. But since Theorem \ref{theo:postselect} of the main text is also applicable to the multiparty hashing protocol, we eliminate these issues and immediately infer for the multiparty hashing protocol prepended by symmetrization by using (\ref{eq:conf:multi:local}) that 
\begin{align}
\|\E^{s} - \F^{s}\|_{\diamond} & \leq 4 (n+kn+1)^{4^M-1} \sqrt{\varepsilon_{\text{H}}} \label{eq:conf:mutli:arb}.
\end{align}
The proof of (\ref{eq:conf:mutli:arb}) is simple: Theorem \ref{theo:postselect} of the main text applies to the multiparty hashing protocol with $d=2^M$, where $M$ denotes the number of participants. Hence (\ref{eq:conf:multi:local}) implies (\ref{eq:conf:mutli:arb}) via Theorem \ref{theo:postselect} of the main text. 

\bibliographystyle{apsrev4-1}
\bibliography{repeater_hashing}

\end{document}